\numberwithin{equation}{section}
\newtheorem{theorem}{Theorem}[section]
\newtheorem{lemma}[theorem]{Lemma}
\newtheorem{remark}[theorem]{Remark}
\def\bfR{{\mathbb R}}
\def\pp{\mbox{\bf P}}
\def\ee{\mbox{\bf E}}
\def\dd{\mbox{\rm d}}
\newcommand*{\fatdot}{ \raisebox{-1pt}{\makebox[.5ex]{\textbf{$\cdot$}}} }
\begin{document}
\title{On Trading American Put Options with Interactive Volatility}
\author{Sigurd Assing \& 
Yufan Zhao\footnote{Supported by EPSRC Ph.D.\ scholarship \#ASTAA1213YXZ.}\\
Department of Statistics, The University of Warwick,
Coventry CV4 7AL, UK\\
e-mail: s.assing@warwick.ac.uk}
\date{}
\maketitle
\begin{abstract}
We introduce a simple stochastic volatility model,
whose novelty consists in 
taking into account hitting times of the asset price,
and study the optimal stopping problem corresponding to a put option
whose time horizon (after the asset price hits a certain level)
is exponentially distributed.
We obtain explicit optimal stopping rules in various cases
one of which is interestingly complex because of
an unexpected disconnected continuation region.
Finally, we discuss in detail how these stopping rules could be used for trading
an American put
when the trader expects a market drop in the near future.
\end{abstract}
\noindent
{\large KEY WORDS}\hspace{0.4cm}
optimal stopping,
stochastic volatility model,
regime switching,
American option

\vspace{10pt}
\noindent
{\it Mathematics Subject Classification (2010)}:
Primary 60G40; Secondary 60J28
\section{Introduction and Results}\label{results}
This paper deals with an optimal stopping problem which is motivated
by option trading.

First, we introduce a simple short term model for the price of an asset
which is able to capture some aspects of the so-called leverage effect,
and, second, under such a model,
we predict value and exercise time 
of a perpetual American put written on this asset.

The leverage effect refers to the phenomenon that, typically, decreasing asset prices
are accompanied by rising volatility. We will not argue about whether the leverage effect
is a true phenomenon or not. We rather treat it as an observed phenomenon
which has been discussed in many papers since the mid 1970s when Black, \cite{black1976},
gave a well received macroeconomic explanation.
Since this effect has been observed, risk-seeking market participants
might want to take advantage of it.

However, other effects can superimpose a possible leverage effect.
For example, a decreasing stock price after a negative earning report
usually goes along with falling volatility as uncertainty decreases after
an announced event. Hence, the decision to bet on a combination of falling prices and 
rising volatility requires a careful analysis of relevant market conditions
which is left to the acting market participants.

The market participant we have in mind  is an option trader
who has made this decision and plans to go long on an American put.
The rationale behind going long on an American put when betting on a leverage effect is twofold;
falling prices and increasing volatility would both raise put prices.
But, if the trader wants to understand the risk of such a betting strategy before entering the trade, 
they should create a model for the price $(S_t,\,t\ge 0)$ 
of the asset underlying the American put which, 
first, is simple enough, second, is able to capture key features
of the trader's preferences for the future and, third, has enough parameters
to control the probabilities of different scenarios of future prices.

The model we propose can heuristically be described as follows:
\begin{itemize}\label{bullets}
\item
the price $S_t,\,t\ge 0$,
behaves like a geometric Brownian motion 
with volatility parameter $\sigma_0$ and trend $\mu_0$
until it hits a critical level $s_0\ll s$ where $s$ is the present price;
\item
when hitting the critical level,
the volatility parameter steps up to $\sigma_1$
and the trend of the stock changes to $\mu_1$;
\item
this `excited' state lasts for a period of length $T$ which is exponentially distributed 
with rate $\lambda$;
\item
finally the price is frozen at its value taken
when the exponential time $T$ has expired.
\end{itemize}

The \underline{new} feature of this volatility model 
is its dependence on hitting times of the price process
which has not been discussed in the literature, yet.
We call such a stochastic volatility {\it interactive volatility} to emphasise
this extra dependence.
\begin{remark}\rm\label{world}
\begin{itemize}\item[(i)]
The above model supposes $S_t=S_{t\wedge(\tau_{s_0}+T)}$, for $t\ge 0$, where
\begin{equation}\label{hittingDef}
\tau_{a}
\,\stackrel{\mbox{\tiny def}}{=}\,
\inf\{t\ge 0:S_t\le a\},
\end{equation}
for given price levels $a$.
The reason for freezing the price at $\tau_{s_0}+T$ is that
this time span is considered the time horizon of the trade:
the trader wants to be out once the market's volatility
has dropped back to normal.
Studying a \underline{perpetual} American put under this model 
easily reveals that the put
should be optimally exercised at the random time $\tau_{s_0}+T$, 
latest---see Remark \ref{discount}(ii).
\item[(ii)]
The notation $s_0\ll s$ is used to emphasise that the difference $s-s_0$
between the present price of the asset and the critical level should be chosen
big enough. The size of $s-s_0$ determines the strength of the market's drop
which causes the regime change from volatility $\sigma_0$ to $\sigma_1$
according to the leverage effect.
\item[(iii)]
A reasonable choice for $\sigma_0$ would be the implied volatility at present time
of the \underline{traded} American put the trader wants to long. Now recall that
\[
\log S_t\,=\,
\log s\,+(\mu_0-\frac{\sigma_0^2}{2})\,t\,
+\sigma_0\hspace{-.8cm}\overbrace{B_t}^{Brownian\;motion}
\]
is assumed to hold for $t\in[0,\tau_{s_0})$.
Hence, for fixed $\sigma_0$, the choice of $\mu_0$ determines
the distribution of the hitting time $\tau_{s_0}$. To meet the
preferences of the trader of a market-fall in the near future,
$\mu_0$ should be chosen sufficiently small to decrease 
the probability of large values of $\tau_{s_0}$.
But, the trader should also analyse the optimal stopping problem
for larger values of $\mu_0$, that is, they should analyse
their position under the assumption they are wrong
and the probability of a market-fall in the near future is rather small.
\item[(iv)]
For $t\in[\tau_{s_0},\tau_{s_0}+T)$, which is the final period of the trade,
the trader assumes
\[
\log S_t\,=\,
\log s_0\,+(\mu_1-\dfrac{\sigma_1^2}{2})(t-\tau_{s_0})
+\,\sigma_1(B_t-B_{\tau_{s_0}}).
\]
The choice of the parameters $\sigma_1,\mu_1$ reflects the trader's view
on the strength of the regime change triggered by the leverage effect,
and hence this choice is more or less subject to both the trader's
experience and their understanding of the market's history.
\item[(v)]
Using an exponential time $T$ for modelling the time span of the impact of the
leverage effect keeps the model simple enough. It is also assumed that $T$
is independent of what has happened before $\tau_{s_0}$.
The parameter $\lambda$ should be big enough to ensure that, on average,
the time span of the new volatility regime is of the order of days and not weeks.
If both $\tau_{s_0}$ and $T$ are on average rather short then the whole trade's
time horizon is likely to be less than the time to maturity of traded
American puts.
\item[(vi)]
Following the suggestions made in items (iii),(iv) above,
the trader's reasoning behind choosing $\mu_0,\mu_1$  has nothing to do 
with the market's rate of interest during the time span of their trade.
Thus, the model's underlying probability measure should be considered 
a guess of the real-world measure rather than a pricing measure.
Working out the optimal exercise time of the perpetual American put 
in the context of this model
gives the trader an indication of when to exit a trade they entered in accordance with 
their own preferences for the future.
The value of the put under the model is mainly used for finding the optimal exercise time,
and should NOT be confused with the price of a traded put.
\item[(vii)]
Our analysis can be used to motivate the choice of a traded American put 
with reasonable strike level and time to maturity
for the purpose of betting on a combination of falling prices 
and rising volatility---see Section \ref{numerical} for detailed examples.
\end{itemize}
\end{remark}

The proposed model has features of a Markov chain regime switching volatility model
as the excited state, when the volatility is $\sigma_1$, lasts for an exponential time.
But, at the end of this exponential time, 
instead of moving into a state which corresponds to another volatility level,
the Markov chain moves into an absorbing state. 
So, for the second and final period of the trade, the model can be regarded as
a degenerated Markov chain regime switching volatility model. 
We will comment on Markov chain regime switching volatility models
in Remark \ref{existingLit}(ii) below.

For the initial period of the trade, the model is different to a 
Markov chain regime switching volatility model as the system does not enter
the excited state following the move of a Markov chain. Instead, it enters
the excited state according to if the price of the asset has fallen to the
critical level $s_0$ or not, that is, according to 
how the price of the asset has behaved in the past.

To achieve Markovianity, 
we add a process $(Y_t,\,t\ge 0)$ for screening 
whether the price $S_t$ has already fallen to the critical level $s_0$ or not.
To fully describe the dynamics of $S_t$,
we also add a process $(\eta_t,\,t\ge 0)$ which 
is an absorbing Markov chain screening the length of the excitation.

Technically, we work with a strong Markov process,
$(S,Y,\eta)=(S_t,Y_t,\eta_t,t\ge 0)$,
on a family of probability spaces
$(\Omega,{\cal F},\pp_{\!\!s,y,i},(s,y,i)\in\bar{A})$,
where
\[
A\,\stackrel{\mbox{\tiny def}}{=}\,
\left[\rule{0pt}{12pt}\right.
(s_0,\infty)\times\{0\}\times\{1\}
\left.\rule{0pt}{12pt}\right]
\cup
\left[\rule{0pt}{12pt}\right.
(0,\infty)\times\{1\}\times\{0,1\}
\left.\rule{0pt}{12pt}\right]
\]
is considered a subset of the topological space 
$(0,\infty)\times\{0,1\}\times\{0,1\}$
equipped with the product topology.

The generator of this process is formally defined by
%{\small
\[
\begin{array}{rcll}\label{big generator}
{L}f(s,0,1) 
&=&
\mu_0 s\partial_1f(s,0,1)+\frac{1}{2}\sigma_0^2 s^2 \partial_1^2 f(s,0,1),
&
\mbox{for $s\in(s_0,\infty),$}\\
\rule{0pt}{15pt}
{L}f(s,1,1) 
&=&
\mu_1 s\,\partial_1 f(s,1,1)+\frac{1}{2}\sigma_1^2 s^2\partial_1^2 f(s,1,1)
+\lambda \left[ f(s,1,0)-f(s,1,1) \rule{0pt}{10pt}\right],
&
\mbox{for $s\in(0,\infty),$}\\
\rule{0pt}{15pt}
{L}f(s,1,0) 
&=&
0,
&
\mbox{for $s\in(s,\infty),$}
\end{array}
\]
%}\noindent
and considered an unbounded operator 
on the space $C_\infty(\bar{A})$ of continuous functions on $\bar{A}$
vanishing at infinity.
Its domain consists of all $f\in C_\infty(\bar{A})$ satisfying 
$f(s_0,1,1)-f(s_0,0,1)=0$ such that $Lf$,
when understood in the sense of Schwartz distributions on $A$,
gives a continuous function on $A$ vanishing at infinity
and with finite limits on $\partial A$.
\begin{remark}\rm\label{explain generator}
\begin{itemize}\item[(i)]
The condition $f(s_0,1,1)-f(s_0,0,1)=0$ is a discrete  Neumann boundary condition.
This boundary condition implies both
$\pp_{\!\!s_0,0,1}=\pp_{\!\!s_0,1,1}$ and
an interaction between the states
$(s_0,0,1)$ and $(s_0,1,1)$ leading to a jump of the process $Y_t$ 
when the price $S_t$ reaches $s_0$.
\item[(ii)]
As a consequence, for all $s>s_0$, under $\pp_{\!\!s,0,1}$, it holds that
$Y_t={\bf 1}_{[\tau_{s_0},\infty)}(t)$, $t\ge 0$,
and $\eta_t=1,\,t\le\tau_{s_0}$,
whereas,
for all $s>0$, under $\pp_{\!\!s,1,1}$, it holds that
$Y_t=1,\,t\ge 0$, and $(\eta_t,\,t\ge 0)$
is an independent two-states continuous-time Markov chain 
starting from one and absorbed at zero with rate $\lambda$.
\item[(iii)]
Combining (ii)
and 
${L}f(\cdot\,,1,0)\,\equiv\,0$ yields
\[
\pp_{\!\!s,y,i}
\left(\rule{0pt}{12pt}\right.\{
\mbox{
$S_t\,=\,S_{t\wedge\tau_{\eta,0}}$ for all  $t\ge 0$}
\}\left.\rule{0pt}{12pt}\right)
=\,1,
\quad\mbox{for all $(s,y,i)\in A$},
\]
where
\begin{equation}\label{T}
\tau_{\eta,0}\,\stackrel{\mbox{\tiny def}}{=}\,\inf\{t\ge 0:\eta_t=0\},
\end{equation}
that is, $\tau_{\eta,0}$ plays the role of what was called $\tau_{s_0}+T$
in Remark \ref{world}(i).
\item[(iv)]
Taking into account the other defining properties of the generator $L$,
the $S$-component of the process $(S,Y,\eta)$ started at $s>s_0$ has,
under $\pp_{\!\!s,0,1}$, the same law as the  price process
discussed  in items (iii) and (iv) of Remark \ref{world}.
Note that we could have worked with the process $(S,Y)$
killed at rate $\lambda$ after the jump of $Y$, instead.
But, as explained in Remark \ref{existingLit}(ii) below,
using an extra component like $\eta$ has the advantage that 
we can apply results on optimal stopping in the context of
Markov chain regime switching models.
\item[(v)]
Since $(S,Y,\eta)$ is strong Markov, the filtration $({\cal F}_t,\,t\ge 0)$
generated\footnote{Here,  {\it `filtration generated by $(S,Y,\eta)$\!'} refers to
the universal augmentation of the filtration
$(\sigma(S_u,Y_u,\eta_u:u\le t),\,t\ge 0)$
see Section 2.7.B of \cite{Karatzas} for a good account on universal filtrations.}
by $(S,Y,\eta)$
is right-continuous, and, by obvious reasons,
this filtration coincides with the smallest
right-continuous filtration which contains the universal augmentation of
$(\sigma(S_u:u\le t),\,t\ge 0)$.
\end{itemize}
\end{remark}

All in all, we have established a probability model 
predicting prices of an asset according to the features laid out 
in the four bullet points on page \pageref{bullets}.

Next, under this model, we will study value and optimal exercise time
of a perpetual American put contract written on this asset.
Using our probability model, such a put's value function takes the form
\begin{equation}\label{our problem}
V(s,y,i)
\,\stackrel{\mbox{\tiny def}}{=}\,
\sup_{\tau\ge 0}\,\ee_{s,y,i}[e^{-\alpha\tau}(K-S_{\tau})^+],
\quad\mbox{for $(s,y,i)\in A$},
\end{equation}
where the supremum is taken over all stopping times with respect to the filtration $({\cal F}_t,\,t\ge 0)$.
\begin{remark}\rm\label{discount}
\begin{itemize}\item[(i)]
The discount rate $\alpha$ refers to the rate of return of an investment 
the trader considers more or less riskless during the time interval of the trade. 
As explained in items (iii) and (v) of Remark \ref{world}, 
under the future preferences of the trader, this time interval
is supposed to be rather short on average, and hence choosing $\alpha$ to be constant 
is a good approximation. 
\item[(ii)]
Note that
\[
V(s,y,i)\,=\,
\sup_{\tau\le\tau_{\eta,0}}\,\ee_{s,y,i}[e^{-\alpha\tau}(K-S_{\tau})^+]
\]
because $S_t=S_{t\wedge\tau_{\eta,0}},\,t\ge 0$, implies
$e^{-\alpha\tau}(K-S_{\tau})^+\le e^{-\alpha\tau_{\eta,0}}(K-S_{\tau_{\eta,0}})^+$ 
on
$\tau\ge\tau_{\eta,0}$.
Hence, under our model, the perpetual put should be exercised at
$\tau_{\eta,0}=\tau_{s_0}+T$, latest.
\end{itemize}
\end{remark}

First recall the results for perpetual American put options 
as obtained in \cite{Mckean} in the context of geometric Brownian motion.
\begin{theorem}\label{18} 
Given on a family of probability spaces
$(\tilde{\Omega},\tilde{\mathcal{F}},\tilde{\pp}_s,s>0)$,
let $(\tilde{S}_t,\,t\ge 0)$ be the Feller process
whose generator is the closure of
\[
\tilde{L}f=\mu_0s f'+\dfrac{1}{2}\sigma_0^2s^2f'',
\quad
f\in C_0^2((0,\infty)),
\]
where the subscript zero, when added to $C^2$, means compact support.
Then, the value function
\[
\tilde{V}(s)
\,\stackrel{\mbox{\tiny def}}{=}\,
\sup\left\{\tilde{\ee}_s[e^{-\alpha\tilde{\tau}}(K-\tilde{S}_{\tilde{\tau}})^+]:
\mbox{$\tilde{\tau}$ stopping time with respect to
$(\tilde{S}_t,\,t\ge 0)$}\right\}
\]
is given by
\begin{equation}
\tilde{V}(s) = 
\begin{cases}
K-s &:\quad s\in(0,b_0]\\
(K-b_0)\bigg(\dfrac{s}{b_0}\bigg)^{\gamma^-} &:\quad s\in(b_0,\infty)
\end{cases}\label{17}
\end{equation}
where $b_0=-\gamma^- K/(1-\gamma^-)$, and
$\gamma^-$ stands for the negative root of the quadratic equation
\begin{equation}\label{30}
\frac{1}{2}\sigma_0^2\gamma^2+(\mu_0-\frac{1}{2}\sigma_0^2)\gamma-\alpha=0.
\end{equation}
\end{theorem}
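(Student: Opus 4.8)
The plan is a guess-and-verify argument organised around a free-boundary problem, which is the classical route (and the one underlying \cite{Mckean}). Write $W$ for the right-hand side of \eqref{17}; its shape is dictated by the heuristic that the stopping region is an interval $(0,b_0]$ on which $W$ equals the payoff $K-s$, while on the continuation region $(b_0,\infty)$ one has $\tilde L W-\alpha W=0$ with $W$ bounded at infinity and joined to $K-s$ in a $C^1$ fashion at $b_0$. Before anything else I would record the elementary facts about \eqref{30}: since $\alpha>0$ the product of its two roots equals $-2\alpha/\sigma_0^2<0$, so there is exactly one negative root $\gamma^-$ and one positive root, and consequently $0<b_0=-\gamma^-K/(1-\gamma^-)<K$, which is what makes the two pieces of $W$ fit together consistently (in particular the payoff is strictly positive on $(0,b_0]$). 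A one-line computation with the Euler ansatz $s^\gamma$ shows that the bounded-at-infinity solutions of $\tfrac12\sigma_0^2s^2f''+\mu_0sf'-\alpha f=0$ are the multiples of $s^{\gamma^-}$, and imposing value- and derivative-matching at $b_0$ recovers exactly the constant $(K-b_0)b_0^{-\gamma^-}$ and the value of $b_0$ stated in the theorem.

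The verification then has three ingredients. First, $W\ge(K-\fatdot)^+$ on $(0,\infty)$: this is an equality on $(0,b_0]$; on $(b_0,\infty)$ the function $s\mapsto W(s)-(K-s)$ vanishes together with its first derivative at $b_0$ and has second derivative $(K-b_0)b_0^{-\gamma^-}\gamma^-(\gamma^--1)s^{\gamma^--2}>0$, hence is nonnegative, while $W>0=(K-s)^+$ once $s\ge K$. In particular $W$ is convex and $0\le W\le K$, so $W$ is bounded. Second, $(\tilde L-\alpha)W\le 0$ on $(0,\infty)\setminus\{b_0\}$: it vanishes identically on $(b_0,\infty)$ by construction, and on $(0,b_0)$ one computes $(\tilde L-\alpha)W(s)=(\alpha-\mu_0)s-\alpha K$, an affine function of $s$; its value at $s=0$ is $-\alpha K<0$, and its value at $s=b_0$ equals $K(\gamma^-\mu_0-\alpha)/(1-\gamma^-)$, which by \eqref{30} rewrites as $-\tfrac12\sigma_0^2\gamma^-(\gamma^--1)K/(1-\gamma^-)<0$, so the affine function is negative throughout $(0,b_0)$. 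Third, since $W$ is $C^1$, convex, and piecewise $C^2$ with bounded second derivative, the generalised It\^o formula applied to $e^{-\alpha t}W(\tilde S_t)$ produces a local martingale plus the absolutely continuous term $\int_0^te^{-\alpha u}(\tilde L-\alpha)W(\tilde S_u)\,\dd u\le 0$, and boundedness of $W$ upgrades the local martingale to a true one; hence $(e^{-\alpha t}W(\tilde S_t),\,t\ge0)$ is a bounded supermartingale.

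With these in hand the conclusion is standard. Optional stopping of the bounded supermartingale gives, for every stopping time $\tilde\tau$ of $\tilde S$,
\[
\tilde\ee_s\!\left[e^{-\alpha\tilde\tau}(K-\tilde S_{\tilde\tau})^+\right]
\le
\tilde\ee_s\!\left[e^{-\alpha\tilde\tau}W(\tilde S_{\tilde\tau})\,{\bf 1}_{\{\tilde\tau<\infty\}}\right]
\le W(s),
\]
where we use $e^{-\alpha t}W(\tilde S_t)\to0$ on $\{\tilde\tau=\infty\}$; thus $\tilde V\le W$. For the reverse inequality, let $\tilde\tau_{b_0}=\inf\{t\ge0:\tilde S_t\le b_0\}$; on $[0,\tilde\tau_{b_0})$ the process $\tilde S$ stays in $(b_0,\infty)$, where $(\tilde L-\alpha)W=0$, so $e^{-\alpha(t\wedge\tilde\tau_{b_0})}W(\tilde S_{t\wedge\tilde\tau_{b_0}})$ is a bounded martingale; letting $t\to\infty$ (bounded convergence, together with $\tilde S_{\tilde\tau_{b_0}}=b_0$ on $\{\tilde\tau_{b_0}<\infty\}$ by path-continuity and $e^{-\alpha t}W(\tilde S_t)\to0$ on $\{\tilde\tau_{b_0}=\infty\}$) yields $W(s)=\tilde\ee_s[e^{-\alpha\tilde\tau_{b_0}}(K-\tilde S_{\tilde\tau_{b_0}})^+]\le\tilde V(s)$. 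Hence $\tilde V=W$ and $\tilde\tau_{b_0}$ is an optimal stopping time.

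I expect the only genuinely delicate point to be the third ingredient above: the $C^1$-matching forces $W$ to fail to be twice differentiable at the free boundary $b_0$ (the second derivative jumps there), so the ordinary It\^o formula does not apply and one must invoke a generalised version (It\^o--Tanaka / Meyer--It\^o). The convexity of $W$ established in the first ingredient is exactly what makes this clean, since the second-derivative term is then governed by a nonnegative measure carrying the favourable sign; and since $W'$ is continuous across $b_0$ that measure has no atom at $b_0$, so no local-time term survives and the argument reduces to the sign computation in the second ingredient. Everything else is either the routine smooth-fit calculation or a textbook martingale argument, and in principle one could bypass the verification altogether by quoting \cite{Mckean} directly.
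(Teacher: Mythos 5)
Your verification is correct. Note, however, that the paper offers no proof of Theorem \ref{18} at all: it is explicitly ``recalled'' from McKean \cite{Mckean} and used as a black box, so there is no in-paper argument to compare yours against line by line. What you have written is exactly the guess-and-verify scheme --- dominance of the gain function, $(\tilde L-\alpha)W\le 0$ off the boundary, the supermartingale/martingale dichotomy, and Meyer's generalised It\^o formula with the local-time term killed by smooth fit --- that the authors themselves codify as (v1),(v2),(v3) at the start of Section \ref{24} and deploy for their genuinely new results (Lemma \ref{23}, cases (iii) and (iv)); your proof is therefore entirely in the spirit of the paper, just applied to the classical special case the paper chooses to cite rather than prove. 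Your computations check out: the product of the roots of (\ref{30}) is $-2\alpha/\sigma_0^2<0$, so $\gamma^-<0<\gamma^+$ and $b_0\in(0,K)$; on $(0,b_0)$ one indeed gets $(\tilde L-\alpha)W(s)=(\alpha-\mu_0)s-\alpha K$ with the endpoint value $K(\gamma^-\mu_0-\alpha)/(1-\gamma^-)=-\tfrac12\sigma_0^2\gamma^-(\gamma^--1)K/(1-\gamma^-)<0$; and the handling of $\{\tilde\tau_{b_0}=\infty\}$ via boundedness of $W$ matches the convention the paper fixes in Remark \ref{meaning b0}(ii). The only point worth tightening is the claim that ``boundedness of $W$ upgrades the local martingale to a true one'': the cleaner statement is that the integrand $e^{-\alpha u}\sigma_0\tilde S_uW'(\tilde S_u)$ is bounded (since $sW'(s)$ equals $-s$ on $(0,b_0)$ and a constant multiple of $s^{\gamma^-}$ beyond), or simply that a local martingale bounded below is a supermartingale, which is all the optional-stopping step needs.
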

\begin{remark}\rm\label{meaning b0}
\begin{itemize}\item[(i)]
The above value function $\tilde{V}$ satisfies 
\begin{equation*}
0\,=\,\mu_0 s\tilde{V}'(s)+
\frac{1}{2}\sigma_0^2 s^2\tilde{V}''(s)
-\alpha\tilde{V}(s), \quad\text{ for } s>b_0,
\end{equation*}
subject to
\[
\tilde{V}(b_0)=K-b_0,
\quad
\tilde{V}'(b_0)=-1,
\quad
\lim_{s\to\infty}\tilde{V}(s)=0.
\]
\item[(ii)]
If $\mu_0-\frac{1}{2}\sigma_0^2>0$, 
then there is no (finite) optimal stopping time 
at which the value function $\tilde{V}$ can be attained. 
But $\tilde{\tau}_{b_0}$ is a Markov time at which 
$\tilde{V}$ given in Theorem \ref{18}
is attained when setting
$\tilde{\ee}_s[e^{-\alpha\tilde{\tau}_{b_0}}(K-\tilde{S}_{\tilde{\tau}_{b_0}})^+]$
to be zero
on $\{\tilde{\tau}_{b_0}=\infty\}$.
In all further cases below, 
attaining a value function at
a possibly infinite Markov time will be understood as above,
since all considered value functions vanish at infinity.
\end{itemize}
\end{remark}

The next theorem presents the main result of this paper.
\begin{theorem}\label{main}
Recall (\ref{hittingDef}),\,(\ref{T}), 
and Remark \ref{world}(ii) for the purpose of $s_0$,
and Remark \ref{meaning b0}(ii) for the meaning of $b_0$.
Let $\gamma^+$ ($\gamma^-$) denote the positive (negative) root of 
equation (\ref{30}).
The following cases completely describe the value function
given by (\ref{our problem}).
\begin{itemize}
\item[(i)]
In the trivial case,
\[
V(s,1,0)=(K-s)^+,\quad\mbox{for all $s>0$},
\]
and the optimal stopping time is $0$.
\item[(ii)] Let $\beta^+$ ($\beta^-$) denote the positive (negative) 
root of the quadratic equation
\begin{equation}
\frac{1}{2}\sigma_1^2\beta^2+(\mu_1-\frac{1}{2}\sigma_1^2)\beta-(\alpha+\lambda)=0.\label{100}
\end{equation}
Then, there exists a smooth function $h:(0,\infty)\to\bfR$ such that
\begin{equation*}
V(s,1,1)=\begin{cases}
K-s &:\quad s\in(0,b_1]\\
c_1s^{\beta^+}+c_2s^{\beta^-}
+h(s) &:\quad s\in(b_1,K]\\
d_2s^{\beta^-} &:\quad s\in(K,\infty)
\end{cases}
\end{equation*}
where the coefficients $c_1$, $c_2,\,d_2$ 
and the stopping level $b_1$ are obtained by solving 
the equations (\ref{99}) on page \pageref{99}. 
The finite optimal stopping time is $\tau_{b_1}\wedge\tau_{\eta,0}$.
\item[(iii)]If one of the conditions
\begin{itemize}
\item[(a)]
$b_0> s_0$ and $V(s_0,1,1)\geq (K-b_0)({s_0}/{b_0})^{\gamma^-}$,
\item[(b)]
$b_0\leq s_0$ and $V(s_0,1,1)>(K-s_0)^+$,
\item[(c)]
$b_0\leq s_0<K$ and $V(s_0,1,1)=(K-s_0)$,
\end{itemize}
is satisfied, then
\begin{equation*}
V(s,0,1)=V(s_0,1,1)\bigg(\frac{s}{s_0}\bigg)^{\gamma^-}, 
\quad\text{for } s>s_0.
\end{equation*}
This value function is attained at the possibly infinite Markov time
$\tau_{b_1}\wedge\tau_{\eta,0}$, if (a),(b),
and $\tau_{s_0}$, if (c).
\item[(iv)] If $b_0>s_0$ 
and $K-s_0\le V(s_0,1,1)<(K-b_0)({s_0}/{b_0})^{\gamma^-}$, then
\begin{equation*}
V(s,0,1) = \begin{cases}
e_1^* s^{\gamma^+}+\;e_2^* s^{\gamma^-} &:\quad s\in  (s_0,b_*)\\
\rule{0pt}{20pt}
K-s &:\quad s\in [b_*, b_0]\cap(s_0,\infty)\\
(K-b_0)\bigg(\dfrac{s}{b_0}\bigg)^{\gamma^-} &:\quad s\in(b_0,\infty) \\
\end{cases}
\end{equation*}
where\footnote{Here $(s_0,b_*)=\emptyset$ by definition when $b_*=s_0$.}
$b_*=s_0$, if $K-s_0=V(s_0,1,1)$,
and $e_1^*,\,e_2^*,\,b_*$ 
taken from the proof of Lemma \ref{choice b*} on page \pageref{choice b*}, 
otherwise.
The value function is attained at the possibly infinite Markov time
$\tau_{[b_*,b_0],0}\wedge\tau_{b_1}\wedge\tau_{\eta,0}$, where
\[
\tau_{[a,b],0}\stackrel{\mbox{\tiny def}}{=} \inf\{t\ge 0:S_t\in[a,b],Y_t=0\},
\]
for levels $0<a<b$.
\end{itemize}
\end{theorem}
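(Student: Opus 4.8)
The proof is by the classical guess-and-verify method, carried out one sheet of $A$ at a time and using the strong Markov property of $(S,Y,\eta)$ to pass information between sheets. Case (i) is immediate: under $\pp_{\!\!s,1,0}$ the third line of $L$ forces $S_t\equiv s$, so $e^{-\alpha\tau}(K-S_\tau)^+=e^{-\alpha\tau}(K-s)^+$ and $\tau=0$ is optimal. In particular $(K-s)^+$ is the reward inherited on the $(s,1,1)$-sheet whenever $\eta$ is absorbed, and, through the boundary condition $f(s_0,1,1)=f(s_0,0,1)$, the quantity $V(s_0,1,1)$ is the reward inherited on the $(s,0,1)$-sheet at time $\tau_{s_0}$.

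For case (ii) I would reduce the problem on $\{(s,1,1)\}$ to a plain optimal stopping problem for geometric Brownian motion. Under $\pp_{\!\!s,1,1}$ one has $Y\equiv1$, $\tau_{\eta,0}$ is exponential with rate $\lambda$ and independent of $S$, and $S$ on $[0,\tau_{\eta,0})$ has the law of geometric Brownian motion with coefficients $(\mu_1,\sigma_1)$; by Remark \ref{discount}(ii), the strong Markov property at $\tau_{\eta,0}$, and case (i), one may stop no later than $\tau_{\eta,0}$, picking up $(K-S_{\tau_{\eta,0}})^+$ there. Conditioning on the Brownian path and integrating out the independent exponential horizon turns this into
\[
V(s,1,1)=\sup_\tau\ee_s\Big[e^{-(\alpha+\lambda)\tau}(K-S_\tau)^+
+\int_0^\tau\lambda e^{-(\alpha+\lambda)u}(K-S_u)^+\,\dd u\Big],
\]
a standard problem with discount $\alpha+\lambda$, obstacle $(K-s)^+$ and nonnegative running reward $\lambda(K-s)^+$. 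On the continuation set the value solves $\tfrac12\sigma_1^2s^2W''+\mu_1sW'-(\alpha+\lambda)W+\lambda(K-s)^+=0$, whose general solution is $c_1s^{\beta^+}+c_2s^{\beta^-}+h(s)$ with $\beta^\pm$ the roots of (\ref{100}) and $h$ an explicit particular solution (affine on $(0,K]$). Since both obstacle and running reward are nonincreasing one guesses a single stopping threshold $b_1\in(0,K)$, continuation with the above expression on $(b_1,K]$ and with the decaying homogeneous solution $d_2s^{\beta^-}$ on $(K,\infty)$ (the running reward vanishing there); value- and derivative-matching at $b_1$ together with $C^1$-matching of the two continuation branches at $K$ yield the four equations (\ref{99}) for $(c_1,c_2,d_2,b_1)$. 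The verification is then standard: smooth fit makes $W$ globally $C^1$, one checks $W\ge(K-s)^+$ and $(L-\alpha)W\le0$ (the latter on $(0,b_1]$ reducing to $s(\alpha-\mu_1)\le\alpha K$), then the It\^o--Tanaka formula and optional sampling give $W\ge V(\cdot,1,1)$, while evaluating along $\tau_{b_1}\wedge\tau_{\eta,0}$ gives equality.

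Cases (iii) and (iv) treat the initial sheet, where $S$ is geometric Brownian motion with coefficients $(\mu_0,\sigma_0)$ run until $\tau_{s_0}$, at which point the state jumps to $(s_0,1,1)$ with continuation value $V(s_0,1,1)$; hence by the strong Markov property
\[
V(s,0,1)=\sup_{\tau\le\tau_{s_0}}\ee_{s,0,1}\big[e^{-\alpha\tau}(K-S_\tau)^+\mathbf{1}_{\{\tau<\tau_{s_0}\}}
+e^{-\alpha\tau_{s_0}}V(s_0,1,1)\mathbf{1}_{\{\tau=\tau_{s_0}\}}\big],
\]
an optimal stopping problem for geometric Brownian motion absorbed at $s_0$ with terminal reward $V(s_0,1,1)$, obstacle $(K-s)^+$ and discount $\alpha$. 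I would first record the elementary facts $\ee_s[e^{-\alpha\tau_{s_0}}]=(s/s_0)^{\gamma^-}$ on $(s_0,\infty)$, $b_0<K$, $K-b_0=K/(1-\gamma^-)$, and that $K-s$ is exactly the tangent line at $b_0$ to the convex decreasing function $(K-b_0)(s/b_0)^{\gamma^-}$ from Theorem \ref{18}, so the latter dominates $(K-s)^+$ on $(0,\infty)$; these also show that conditions (a)--(c) of (iii) and the hypothesis of (iv) are mutually exclusive and exhaustive. In case (iii) the candidate $W(s,0,1)=V(s_0,1,1)(s/s_0)^{\gamma^-}$ satisfies $(L-\alpha)W=0$ on $\{(s,0,1):s>s_0\}$, so $e^{-\alpha(t\wedge\tau_{s_0})}W(S_{t\wedge\tau_{s_0}},0,1)$ is a martingale, and conditions (a), (b), (c) are precisely what makes $W\ge(K-s)^+$ on $(s_0,\infty)$ --- under (a) via $W\ge(K-b_0)(s_0/b_0)^{\gamma^-}(s/s_0)^{\gamma^-}=(K-b_0)(s/b_0)^{\gamma^-}\ge K-s$, and under (b),(c) by a short convexity argument using $b_0\le s_0$ --- the value being attained at $\tau_{s_0}$, which under (a),(b) (where one also gets $b_1\le s_0$) coincides with following the case-(ii) rule $\tau_{b_1}\wedge\tau_{\eta,0}$ from the start. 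In case (iv) this candidate fails, since $V(s_0,1,1)<(K-b_0)(s_0/b_0)^{\gamma^-}$ forces $W(b_0,0,1)<K-b_0$; instead one guesses the disconnected structure of the statement --- $W(\cdot,0,1)$ equal to the McKean value $(K-b_0)(s/b_0)^{\gamma^-}$ on $(b_0,\infty)$, equal to $K-s$ on $[b_*,b_0]$, and equal to an $(L-\alpha)$-harmonic function $e_1^*s^{\gamma^+}+e_2^*s^{\gamma^-}$ on $(s_0,b_*)$, the three unknowns $e_1^*,e_2^*,b_*$ being fixed by value- and derivative-matching at $b_*$ and value-matching at $s_0$ to $V(s_0,1,1)$ (Lemma \ref{choice b*}). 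Verification then proceeds as before: smooth fit at $b_*$ and $b_0$ makes $W$ globally $C^1$, the domain requirement $W(s_0,0,1)=W(s_0,1,1)$ holds by construction, $W\ge(K-s)^+$ holds on $(b_0,\infty)$ by the tangent-line fact and on $(s_0,b_*)$ by Lemma \ref{choice b*}, and $(L-\alpha)W\le0$ on $[b_*,b_0]\subset(0,b_0]$ reduces to the inequality $s(\alpha-\mu_0)\le\alpha K$ valid there (as in the McKean problem); the reverse bound follows by evaluating along $\tau_{[b_*,b_0],0}\wedge\tau_{b_1}\wedge\tau_{\eta,0}$, using the two-sided exit identity for $(L-\alpha)$-harmonic functions on $[s_0,b_*]$ and $\ee_s[e^{-\alpha\tau_{b_0}}]=(s/b_0)^{\gamma^-}$ on $(b_0,\infty)$.

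I expect the genuine difficulty to lie entirely in the two free-boundary systems: solvability of (\ref{99}) with $0<b_1<K$ in case (ii), and of the system in Lemma \ref{choice b*} with $s_0<b_*\le b_0$ in case (iv), together with the sign/shape information needed so that $e_1^*s^{\gamma^+}+e_2^*s^{\gamma^-}\ge K-s$ holds on all of $(s_0,b_*)$ --- this is exactly the inequality that produces the unexpectedly disconnected continuation region and rules out stopping on part of $(b_0,\infty)$ or continuing on part of $[b_*,b_0]$. The remaining ingredients --- the random-horizon reduction, the It\^o--Tanaka and optional-sampling arguments, the hitting-time Laplace transforms, and the case dichotomy --- are routine once the candidate functions are in hand.
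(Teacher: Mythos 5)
Your outline reproduces the paper's overall architecture faithfully: the sheet-by-sheet reduction via the strong Markov property at $\tau_{\eta,0}$ and $\tau_{s_0}$, the random-horizon reformulation of case (ii) (which is exactly the auxiliary problem $\tilde V_1$ the paper uses in its Appendix), the candidate $V(s_0,1,1)(s/s_0)^{\gamma^-}$ for case (iii), and the guessed disconnected stopping region $[b_*,b_0]$ for case (iv). However, there is a concrete gap in your verification of case (ii). You propose to check $(L-\alpha)W\le 0$ on the stopping region $(0,b_1]$, where it "reduces to $s(\alpha-\mu_1)\le\alpha K$" --- but this inequality is \emph{not} automatic: for $\alpha>\mu_1$ it amounts to $b_1\le\alpha K/(\alpha-\mu_1)$, and $b_1$ is only implicitly determined by the nonlinear system (\ref{99}). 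It holds trivially when $\mu_1\ge 0$, but the paper explicitly refuses to assume $\mu_1\ge 0$ (Remark \ref{existingLit}(iii)), and it identifies precisely this unverified step as the hole in Guo/Zhang's argument. The paper circumvents it by an entirely different verification: viscosity-solution regularity theory (Pham) identifies $V(\cdot,1,1)$ with $\tilde V_1$ and gives global $C^1$ regularity of the \emph{actual} value function, Lemma A.2 then shows directly that the continuation region is a half-line $(b_1,\infty)$, and only afterwards is the explicit formula read off from the free-boundary problem. Your route, as stated, would fail (or at least be incomplete) for negative $\mu_1$.

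A second, related issue is that you defer exactly the steps that carry the paper's technical weight without indicating how they would be done. Existence and uniqueness of $(c_1,c_2,d_2,b_1)$ solving (\ref{99}) is not a routine matter --- the paper's Example A.4 exhibits a system of the same type, arising from an optimal stopping problem with convex gain, that has multiple solutions --- so "the coefficients are obtained by solving (\ref{99})" needs Lemma A.3. Likewise, in case (iv) the existence of $b_*\in(s_0,b_0)$ with $\partial_1\Gamma(b_*,b_*)=-1$, and the domination $e_1^*s^{\gamma^+}+e_2^*s^{\gamma^-}>K-s$ on $(s_0,b_*)$, require the argument of Lemma \ref{choice b*}: locating $b_*$ between the smaller root $s_1$ of $V(s_0,1,1)(s/s_0)^{\gamma^-}=K-s$ and $b_0$ via the intermediate value theorem applied to $g(V(s_0,1,1),\cdot)$, proving positivity of $e_1^*,e_2^*$, and a convexity analysis of $s\mapsto e_1^*s^{\gamma^+}+e_2^*s^{\gamma^-}$ on $(0,b_*)$ (nontrivial when $0<\gamma^+<1$). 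You correctly identify these as the genuine difficulties, but identifying them is not the same as resolving them; without these lemmas the candidate functions are not even well defined, let alone verified.
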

\begin{remark}\rm\label{existingLit}
\begin{itemize}\item[(i)]
The contribution of this paper consists  in the two non-trivial cases (iii) and (iv),
but also in the critical review of the known case (ii)---see 
Remark \ref{existingLit}(iii+iv) below.
Case (iv) is mathematically most interesting because 
it leads to a disconnected continuation region.
As explained in Remark \ref{world}(vi),
only the optimal stopping times are of true relevance for the trader.
Note that there are no further sub-cases than those mentioned under (iii) and (iv).
The restriction to $s_0<K$ in case (iii)(c) follows from
the fact that $V(s_0,1,1)>0$ if $s_0\ge K$.
We will discuss numerical examples for all sub-cases in Section \ref{numerical}.
\item[(ii)]
The calculations in
both cases (iii) and (iv) rely upon
the solution of the optimal stopping problem under the measures
$\pp_{\!\!s,1,1},\,s>0$, obtained under case (ii).
Under these measures, the price process $(S_t,\,t\ge 0)$
satisfies the stochastic differential equation
\[
\dd S_t\,=\,\mu_1\eta_t S_t\,\dd t+\sigma_1\eta_t S_t\,\dd B_t
\]
which is a consequence of what was discussed in 
Remark \ref{world}(iv) and (ii)+(iii) of Remark \ref{explain generator}.
Note that the above stochastic differential equation describes a 
Markov modulated geometric Brownian motion which
has extensively been used in the literature as a simple 
Markov chain regime switching volatility model.
However, there are just two important papers on pricing perpetual American puts
in the context of such a regime switching model:
one by Guo/Zhang, \cite{Guo}, treating the case of two-states Markov chains,
and another by Jobert/Rogers, \cite{Rogers}, 
treating the general case of Markov chains with finitely many states.
Buffington/Elliott's, \cite{BuffEll}, work on finite time horizon American puts
can be considered an extension of the ideas behind \cite{Guo}.
The explicit form of the value function given in case (ii)
is a special case of Guo/Zhang's result when the Markov chain is degenerated.
The function $h:(0,\infty)\to\bfR$ 
can be any solution to the first of the two differential equations
above (\ref{5}) on page \pageref{5}, for example,
\[
h(s)\,=\,
-\dfrac{\lambda s}{\alpha+\lambda-\mu_1}
+\dfrac{\lambda K}{\alpha+\lambda}
\quad\mbox{and}\quad
h(s)\,=\,
\dfrac{\lambda s\log s}{\alpha+\lambda+\sigma_1^2/2}
+\dfrac{\lambda K}{\alpha+\lambda}
\]
for $\alpha+\lambda\not=\mu_1$ and $\alpha+\lambda=\mu_1$, respectively.
\item[(iii)]
To prove that the explicit expression given for $V(\cdot,1,1)$
in case (ii) is indeed the value function requires a verification argument.
Denoting this explicit expression by $V^*(\cdot,1,1)$,
the standard method of verification would be
to verify the properties (v1),(v2),(v3) as given at the beginning
of Section \ref{24}, but with respect to the measures $\pp_{\!\!s,1,1},\,s>0$.
Following \cite{Guo}, one realises that the authors do not verify but assume (v3)
in their Theorem 3.1. Furthermore, their proof of (v2) is incomplete
as they only justify $(\alpha-L)V^*(\cdot,1,1)\le 0$ inside of the continuation region.
Outside of the continuation region, that is, when $V^*(\cdot,1,1)$ coincides with 
$(K-\cdot)^+$, the validity of $(\alpha-L)V^*(\cdot,1,1)\le 0$ would depend on 
the value of $b_1$, and this issue was not addressed in \cite{Guo}.
Finally they also assume $\mu_1\ge 0$ which is an assumption
we might want to avoid as explained in Remark \ref{world}.
All in all, the verification of the value function given in \cite{Guo} is incomplete
and their assumptions are too restrictive for our purpose,
and that's why we decided to address this verification in the Appendix.
\item[(iv)]
We also checked the verification arguments given in the proof of
\cite[Prop.2]{Rogers}. Here, the authors in particular demonstrate
$(\alpha-L)V^*(\cdot,1,1)\le 0$ outside of the continuation region but their
proof requires $(e^{-\mu_1 t}S_t,\,t\ge 0)$ to be a supermartingale
which restricts the choice of $\mu_1$ to $\mu_1\le r$.
We cannot follow their proof of (v3) as it looks to us as if they applied
Doob's optional sampling theorem using an unbounded stopping time.
Furthermore, we do not see why, without using further arguments,
the non-linear equations mentioned in \cite[Problem 2]{Rogers}
should have unique solutions (see Remark {\bf A.1} in the Appendix
for further details).
\end{itemize}
\end{remark}
\section{Proofs}
Both cases (i) and (ii) of Theorem \ref{main} assume $y=1$ which leads to
a special version of the results obtained in \cite{Guo}
where American puts were priced in the context of a 
two-states Markov chain volatility model. 
In our case, the $Q$-matrix of the corresponding Markov chain is degenerated.

Therefore, in the next section, we only sketch the proof of case (ii).
But we give enough details to put the notation used in Theorem \ref{main}(ii)
into context.
However, recall Remark \ref{existingLit}(iii) where we explained that the
verification of the value function in \cite{Guo} was incomplete.
The corresponding details can be found in the Appendix.
\subsection{Proof of Theorem \ref{main} (i) and (ii)}\label{guo case}
For all $s>0$, and any stopping time $\tau$, 
\begin{equation}
\ee_{s,1,i}[e^{-\alpha\tau}(K-S_{\tau})^+]\leq\ee_{s,1,i}[e^{-\alpha(\tau\wedge\tau_{\eta,0})}(K-S_{\tau\wedge \tau_{\eta,0}})^+], \quad\text{for } (s,i)\in(0,\infty)\times\{0,1\},\nonumber
\end{equation}
since the process $(S_t, t \geq 0)$ is stopped at $\tau_{\eta,0}$. 
Hence $V(s,1,0)=(K-s)^+$ with optimal stopping time 0 proving (i).

For showing (ii), recall that $Y_t=1$, for all $t\ge 0$, a.s., 
when starting the dynamics from any $s>0,\,y=i=1$.
Now,
assume that the stopping region takes the form
\[
(0,b_1]\times \{1\}\times\{1\} \cup (0,\infty)\times\{1\}\times\{0\}
\]
when starting from $s>0,\,y=i=1$, where $b_1$ is an unknown stopping level.
Then, by \cite[Theorem 2.4]{Peskir} for example, 
if $V$ is lower semi-continuous,
the value function $V(s,1,1)$ would be attained at
\begin{equation}\label{46}
\tau^* = \tau_{b_1}\wedge\tau_{\eta,0}\,,
\end{equation}
and
$(e^{-\alpha(\tau^*\wedge t)}V(S_{t\wedge\tau^*},Y_{t\wedge\tau^*},\eta_{t\wedge\tau^*}),
\,t\ge 0)$ would be a $\pp_{\!\!s,1,1}$-martingale.

We want to use this martingale property to derive equations 
for both $V$ and $b_1$. Assume for now that $V$ has even more regularity
and a generalised It\^o's formula (see Remark \ref{our ito} below) can be applied to obtain
\[
e^{-\alpha(t\wedge\tau^*)}V(S_{t\wedge\tau^*},Y_{t\wedge\tau^*},\eta_{t\wedge\tau^*})=V(S_0,Y_0,\eta_{0})+\int^{t\wedge\tau^*}_0 e^{-\alpha u}(L-\alpha I)V(S_u,Y_u,\eta_{u})\text{d}u + M_{t\wedge\tau^*},\label{43}
\]
for all $t\ge 0$, a.s., 
where $M$ stands for a local martingale and $I$ denotes the identity operator.

Of course, for the above left-hand side to be a martingale, 
the integral on the right-hand side must vanish.
Using both the specific form of $L$ as given on page \pageref{big generator}
and (i) proven above, a sufficient condition for this integral to vanish is
\[
\begin{cases}
0\,=\,
\mu_1 s\partial_1V(s,1,1)
+\frac{1}{2}\sigma_1^2 s^2\partial_{11}V(s,1,1)
+\lambda(K-s)
-(\alpha+ \lambda )V(s,1,1) &\text{for } s\in (b_1,K) \\
0\,=\,\rule{0pt}{20pt}
\mu_1 s\partial_1V(s,1,1)
+\frac{1}{2}\sigma_1^2 s^2\partial_{11}V(s,1,1)
-(\alpha+\lambda )V(s,1,1) &\text{for } s\in (K,\infty)
\end{cases}
\]
depending on the unknown $b_1$ subject to the boundary and pasting conditions
\begin{align}
\lim_{s\to\infty}V(s,1,1)&=0,\nonumber\\
\rule{0pt}{0pt}
V(K{-},1,1)&=V(K{+},1,1),\nonumber\\
\rule{0pt}{15pt}
\partial_1V(K{-},1,1)&=\partial_1V(K{+},1,1),\label{5}\\
\rule{0pt}{15pt}
K-b_1&=V(b_1+,1,1),\nonumber\\
\rule{0pt}{15pt}
-1&=\partial_1 V(b_1+,1,1),\nonumber
\end{align}
where $-$ and $+$ indicate taking left and right limits at
the corresponding argument, respectively.

The well-known solution of the above equation has the form
\begin{equation}\label{classical solution}
V(s,1,1)=\begin{cases}
c_1s^{\beta^+}+\;c_2s^{\beta^-}
+\;h(s) & \text{for } s\in(b_1,K)\\
d_1s^{\beta^+}+\;d_2s^{\beta^-} & \text{for } s\in(K,\infty)
\end{cases}
\end{equation}
with unknown coefficients $c_1$, $c_2$, $d_1$, $d_2$,
and $\beta^+$ ($\beta^-$) being the positive (negative) root of equation (\ref{100}).
For the choice of the function $h$ we refer to Remark \ref{existingLit}(ii).

Certainly, the first of the conditions under (\ref{5}) implies $d_1=0$,
and the other four conditions yield
\begin{equation}\label{99}
\begin{array}{rcl}
c_1K^{\beta^+}+\;c_2K^{\beta^-}
+\;h(K)
&=&
d_2K^{\beta^-},\\
\rule{0pt}{20pt}
c_1\beta^+K^{\beta^+}+\;c_2\beta^-K^{\beta^-}+\;K h'(K)
&=&
d_2\beta^-K^{\beta^{-}},\\
\rule{0pt}{20pt}
K-b_1
&=&
c_1b_1^{\beta^+}+\;c_2b_1^{\beta^-}+\;h(b_1),\\
\rule{0pt}{20pt}
-b_1
&=&
c_1\beta^{+}b_{1}^{\beta^+}+\;c_2\beta^{-}b_1^{\beta^-}
+\;b_1 h'(b_1).
\end{array}
\end{equation}

Note that the coefficients $c_1$, $c_2$, $d_2$ 
linearly depend on $b_1^{\beta^\pm}$
so that the problem comes down to 
solving numerically for $b_1$.
For verification we refer to the Appendix.

This concludes the discussion of $V$ for $y=1$. 
We now turn to the cases (iii) and (iv) of Theorem \ref{main}
dealing with the case $y=0$. 
\subsection{Proof of Theorem \ref{main} (iii)}
Recall the setup of Theorem \ref{18}, but also introduce
\[
\tilde{V}_0(s)
\,\stackrel{\mbox{\tiny def}}{=}\,
\sup\left\{\tilde{\ee}_s[e^{-\alpha\tilde{\tau}}
(K-\tilde{S}_{\tilde{\tau}\wedge\tilde{\tau}_{s_0}})^+]:
\mbox{$\tilde{\tau}$ stopping time with respect to
$(\tilde{S}_t,\,t\ge 0)$}\right\},
\]
for all $s\ge s_0$.
\begin{lemma}\label{23}
If $b_0\leq s_0<K$, then
\begin{equation}
\tilde{V}_0(s) =
(K-s_0)\bigg(\dfrac{s}{s_0}\bigg)^{\gamma^-},
\quad \text{for } s \geq s_0,
\label{19}\end{equation}
and the (possibly infinite) Markov time
$\tilde{\tau}_{s_0}$ is the optimal time.
\end{lemma}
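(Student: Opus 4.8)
The plan is to recognise $\tilde V_0$ as the value function of a perpetual American put on the geometric Brownian motion $\tilde S$ \emph{killed} at the level $s_0$, and to show that under the stated condition $b_0\le s_0<K$ it is never worth continuing past $s_0$, so the optimal rule is simply to stop at $\tilde\tau_{s_0}$. The candidate function is $W(s)\stackrel{\mbox{\tiny def}}{=}(K-s_0)(s/s_0)^{\gamma^-}$ for $s\ge s_0$. I would verify optimality of $W$ by the usual three-point checklist: (v1) $W$ dominates the (killed) payoff, i.e. $W(s)\ge (K-s)^+$ for $s\ge s_0$ with equality at $s=s_0$; (v2) $(\alpha-\tilde L)W\le 0$ on $(s_0,\infty)$ in the distributional sense, so that $e^{-\alpha t}W(\tilde S_{t\wedge\tilde\tau_{s_0}})$ is a supermartingale; (v3) the process stopped at $\tilde\tau_{s_0}$ turns this supermartingale into a martingale, giving $\tilde V_0(s)=W(s)$ and optimality of $\tilde\tau_{s_0}$.

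First I would treat (v2): since $s\mapsto s^{\gamma^-}$ solves $\tilde LW=\alpha W$ exactly on all of $(0,\infty)$ by the defining equation (\ref{30}) for $\gamma^-$, we actually have $(\alpha-\tilde L)W=0$ on $(s_0,\infty)$, with no distributional boundary term to worry about because we only need the inequality on the open interval $(s_0,\infty)$ and $W$ is smooth there. For (v1) the point is the elementary inequality $(K-s_0)(s/s_0)^{\gamma^-}\ge K-s$ for all $s>0$, which holds because the left side is the McKean value function $\tilde V$ from Theorem \ref{18} \emph{evaluated with stopping level $s_0$ in place of $b_0$}; more concretely, $b_0\le s_0$ means the line $K-s$ lies below the power curve through $(s_0,K-s_0)$ on $[s_0,\infty)$ (this is exactly the convexity/tangency picture: at $b_0$ the curve is tangent to $K-s$, and moving the anchor point up and to the left keeps the curve above the line). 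The restriction $s_0<K$ guarantees the anchor value $K-s_0$ is positive so that $W$ is genuinely an upper bound for $(K-s)^+$ and not just for $K-s$.

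For (v3) I would argue that $(e^{-\alpha(t\wedge\tilde\tau_{s_0})}W(\tilde S_{t\wedge\tilde\tau_{s_0}}),\,t\ge0)$ is a bounded martingale: it is a nonnegative local martingale, hence a supermartingale, and on $\{t<\tilde\tau_{s_0}\}$ the integrand is bounded by $W(s)\cdot(\text{something})$—more carefully, $e^{-\alpha t}\tilde S_t^{\gamma^-}$ has constant expectation by the exact ODE, and optional stopping at the bounded time $t\wedge\tilde\tau_{s_0}$ is legitimate because $\tilde S_{u}\ge s_0$ for $u\le\tilde\tau_{s_0}$ keeps $W(\tilde S_u)=(K-s_0)(\tilde S_u/s_0)^{\gamma^-}$ bounded (as $\gamma^-<0$, the power is bounded by $K-s_0$ on $[s_0,\infty)$). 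Letting $t\to\infty$ and using $W(\tilde S_{\tilde\tau_{s_0}})=K-s_0=(K-\tilde S_{\tilde\tau_{s_0}})^+$ on $\{\tilde\tau_{s_0}<\infty\}$ together with $W\to$ the convention value $0$ at $\infty$ on $\{\tilde\tau_{s_0}=\infty\}$ yields $\tilde\ee_s[e^{-\alpha\tilde\tau_{s_0}}(K-\tilde S_{\tilde\tau_{s_0}})^+]=W(s)$, so $\tilde V_0(s)\ge W(s)$; the reverse inequality $\tilde V_0\le W$ follows from the supermartingale property in (v2) applied to an arbitrary stopping time $\tilde\tau$ (after stopping at $\tilde\tau_{s_0}$), via $\tilde\ee_s[e^{-\alpha(\tilde\tau\wedge\tilde\tau_{s_0})}(K-\tilde S_{\tilde\tau\wedge\tilde\tau_{s_0}})^+]\le\tilde\ee_s[e^{-\alpha(\tilde\tau\wedge\tilde\tau_{s_0})}W(\tilde S_{\tilde\tau\wedge\tilde\tau_{s_0}})]\le W(s)$.

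The main obstacle I anticipate is purely the integrability/optional-stopping bookkeeping in (v3) when $\tilde\tau_{s_0}=\infty$ has positive probability (the case $\mu_0-\tfrac12\sigma_0^2>0$, as flagged in Remark \ref{meaning b0}(ii)): one must justify passing to the limit $t\to\infty$ in $\tilde\ee_s[e^{-\alpha(t\wedge\tilde\tau_{s_0})}W(\tilde S_{t\wedge\tilde\tau_{s_0}})]$ and check that the contribution on $\{\tilde\tau_{s_0}=\infty\}$ vanishes. This is handled by dominated convergence: on that event $e^{-\alpha t}W(\tilde S_t)\le (K-s_0)e^{-\alpha t}(\tilde S_t/s_0)^{\gamma^-}$ and, since $\tilde S_t\ge s_0$ forces $(\tilde S_t/s_0)^{\gamma^-}\le1$, the bound is $(K-s_0)e^{-\alpha t}\to0$; uniform integrability then follows from the martingale property already established. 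Everything else is the routine convex-geometry comparison in (v1) and the exact cancellation in (v2), so the lemma reduces to carefully writing out these standard verification steps.
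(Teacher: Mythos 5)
Your proposal is correct and follows essentially the same route as the paper: the same three-point verification (your (v1)/(v2)/(v3) are the paper's (v3)/(v2)/(v1) in permuted order), the same exact cancellation $(\alpha-\tilde L)W=0$ via equation (\ref{30}), the same It\^o/optional-sampling argument with the bound $(\tilde S_t/s_0)^{\gamma^-}\le 1$ on $[s_0,\infty)$, and the same convexity-plus-tangency comparison for $W\ge(K-\cdot)^+$. One tiny slip in your informal geometry: moving the anchor from $b_0$ to $s_0\ge b_0$ moves it \emph{right and down} along the line $K-s$, not ``up and to the left''; what actually does the work (and what the paper writes out) is that $b_0\le s_0$ forces $W'(s_0)=(K-s_0)\gamma^-/s_0\ge -1$, after which convexity of $W$ gives the domination on $(s_0,K)$.
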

\begin{proof}
Assume $b_0\le s_0<K$ and fix $s\ge s_0$.
By `guess and verify',
it suffices to check that the right-hand side of (\ref{19}) satisfies
\begin{itemize}
\item[(v1)]
$(K-s_0)(s/s_0)^{\gamma^-}=\,\tilde{\ee}_s[e^{-\alpha\tilde{\tau}_{s_0}}
(K-\tilde{S}_{\tilde{\tau}_{s_0}})^+
\,{\bf 1}_{\{\tilde{\tau}_{s_0}<\infty\}}]$,
\item[(v2)]
the process
$(e^{-\alpha t}(K-s_0)(\tilde{S}_{t\wedge\tilde{\tau}_{s_0}}/s_0)^{\gamma^-},\,t\ge 0)$
is a $\tilde{\pp}_{\!\!s}$-supermartingale,
\item[(v3)]
$(K-s_0)(s/s_0)^{\gamma^-}\geq (K-s)^+$.
\end{itemize}

For (v1), by It\^{o}'s formula,
\begin{align*}
&\tilde{\ee}_s[e^{-\alpha\tilde{\tau}_{s_0}}
(K-\tilde{S}_{\tilde{\tau}_{s_0}})^+
\,{\bf 1}_{\{\tilde{\tau}_{s_0}<\infty\}}]\\
=&\lim\limits_{t\rightarrow\infty}
\tilde{\ee}_s[\;e^{-\alpha(\tilde{\tau}_{s_0}\wedge t)}(K-s_0)
\left(\frac{\tilde{S}_{\tilde{\tau}_{s_0}\wedge t}}{s_0}\right)^{\gamma^-}\!]\\
=&\lim\limits_{t\rightarrow\infty}\tilde{\ee}_s\!\left[
(K-s_0)\left(\frac{s}{s_0}\right)^{\gamma^-}\!\!
+\;(K-s_0)
\int^{\tilde\tau_{s_0}\wedge t}_0 e^{-\alpha u}\,
[\;(\tilde{L}-\alpha I)\left(\frac{\fatdot}{s_0}\right)^{\gamma^-}\!]
(\tilde{S}_{u})\,du+M_{t\wedge\tilde\tau_{s_0}}
\right]\\
=&\,(K-s_0)\left(\frac{s}{s_0}\right)^{\gamma^-},
\end{align*}
because $(M_t,t\geq 0)$ is a $\tilde{\pp}_{\!\!s}$-martingale and 
the expression inside the integral vanishes.

For (v2), by Markov Property, it suffices to prove that
$\tilde{\ee}_s[e^{-\alpha t}(\tilde{S}_{t\wedge\tilde{\tau}_{s_0}}/s_0)^{\gamma^-}]
\leq (s/s_0)^{\gamma^-}$, for all $t\ge 0$, ignoring the constant $K-s_0$. 
But, for fixed $t\ge 0$,
\[
\tilde{\ee}_s[\;e^{-\alpha t}
\left(\frac{\tilde{S}_{t\wedge\tilde{\tau}_{s_0}}}{s_0}\right)^{\gamma^-}\!]
\leq 
\tilde{\ee}_s[\;e^{-\alpha(t\wedge\tilde{\tau}_{s_0})}
\left(\frac{\tilde{S}_{t\wedge\tilde{\tau}_{s_0}}}{s_0}\right)^{\gamma^-}\!]
=\left(\frac{s}{s_0}\right)^{\gamma^-}\!,
\]
where the last equality was already verified when proving (v1) above.

For (v3), note that
\[
\tilde{V}'(b_0)=(K-b_0)\,\dfrac{\gamma^-}{b_0}=-1
\]
as mentioned in Remark \ref{meaning b0}(i).
Based on two arguments, we can now
deduce that the derivative of $(K-s_0)(\textbf{$\cdot$}/s_0)^{\gamma^-}$ 
is bigger than $-1$ on $(s_0,K)$.
First, the derivative of $(K-s_0)(\textbf{$\cdot$}/s_0)^{\gamma^-}$
is bounded below by $-1$ at $s_0$, because $b_0\le s_0$ implies
\begin{equation*}
\dfrac{\tilde{V}'(b_0)}{(K-s_0)\gamma^-/s_0}=\dfrac{(K-b_0)s_0}{(K-s_0)b_0}\ge 1.
\end{equation*}
Second, $(K-s_0)(\textbf{$\cdot$}/s_0)^{\gamma^-}$ is convex on $(s_0,K)$.

But, if the derivative of $(K-s_0)(\textbf{$\cdot$}/s_0)^{\gamma^-}$ is bigger than $-1$
on $(s_0,K)$ and $(K-s_0)(\textbf{$\cdot$}/s_0)^{\gamma^-}$ 
touches $(K-\textbf{$\cdot$})$ at $s_0$,
then
$(K-s_0)(s/s_0)^{\gamma^-}>(K-s)^+$, for all $s\in (s_0,K)$.
Finally,
$(K-s_0)(s/s_0)^{\gamma^-}>(K-s)^+=0$, for $s\in[K,\infty)$, is obvious. 
\end{proof}
\begin{remark}\label{our ito}\rm
In what follows, we are going to use an easy application of 
Meyer's, \cite{meyer}, generalised It\^{o}'s formula which goes as
follows:
if $\phi:(0,\infty)\to\bfR$ is a function which is
twice continuously differentiable, except at finitely many points
$\{a_1,\dots,a_n\}$, such that
\[
\phi'(a_k\pm)\,\stackrel{\mbox{\tiny def}}{=}\,
\lim_{x\to a_k\pm}\phi'(x)
\quad\mbox{and}\quad
\phi''(a_k\pm)\,\stackrel{\mbox{\tiny def}}{=}\,
\lim_{x\to a_k\pm}\phi''(x)
\]
exist and are finite, $k=1,\dots,n$, 
then\footnote{Note that $(S_t,\,t\ge 0)$ only takes positive values.}
\[
\phi(S_t)\,=\,\phi(S_0)\,+
\int_0^t\phi'(S_u)\,\dd S_u\,+
\frac{1}{2}\int_0^t\phi''(S_u)\,\dd\langle S\rangle_u\,+
\sum\limits_{k=1}^n\frac{1}{2}\,L_t(a_k)\,
[\phi'(a_k+)-\phi'(a_k-)],
\]
for all $t\ge 0$, a.s.,
where $L_t$ stands for the local time of the 
continuous semimartingale $(S_t,\,t\ge 0)$.
Note that both integrands in the above formula are well-defined
for Lebesgue almost every $u$, almost surely, which uniquely determines
the integrals. 

If $\phi'$ is continuous then
the local time terms would even vanish and the above formula would 
look like the classical It\^o's formula.
\end{remark}

We now return to our problem of finding $V(s,0,1)$ for $s>s_0$. 
Recall that we already know $V(s,1,i)$, for all $s>0$, and $i=0,1$. 
\subsubsection{Proof of Theorem \ref{main} (iii)(a) and (iii)(b)}\label{24}
Suppose that either condition (a) or condition (b) of Theorem \ref{main}(iii) is satisfied.
Recall $\tau^*=\tau_{b_1}\wedge\tau_{\eta,0}$, and introduce:
\[
V^*(s,y,i)\,\stackrel{\mbox{\tiny def}}{=}\,
\left\{\begin{array}{lcl}
V(s_0,1,1)\left(\dfrac{s}{s_0}\right)^{\gamma^-}
&:&s>s_0,y=0,i=1\\
V(s,1,i)&:&s>0,y=1,i\in\{0,1\}
\end{array}\right.
\]

Verifying, for any $s>s_0$,
\begin{itemize}
\item[(v1)]
$V^*(s,0,1)=
\ee_{s,0,1}[e^{-\alpha\tau^*}(K-S_{\tau^*})^+\,{\bf 1}_{\{\tau^*<\infty\}}]$,
\item[(v2)]
the process $(e^{-\alpha t}\,V^*(S_t,Y_t,\eta_t),\,t\ge 0)$ is a 
$\pp_{\!\!s,0,1}$\,-\,supermartingale,
\item[(v3)]
$V^*(s,0,1)\ge(K-s)^+$,
\end{itemize}
would imply the conclusion of Theorem \ref{main} in both cases (a) and (b).

We are going to verify (v1),(v2),(v3).
First observe that
$V(s_0,1,1)>(K-s_0)^+$ holds in both cases (a) and (b).
To see this in the non-trivial case (a),
note that $(K-b_0)(\cdot/b_0)^{\gamma^-}$ is strictly convex 
on $[s_0,b_0]$ and touches $(K-s)^+$ at $s=b_0<K$, 
so $V(s_0,1,1)\geq(K-b_0)({s_0}/{b_0})^{\gamma^-}>K-s_0$. 
 
As a consequence,
$(s_0,1,1)$ is in the continuation region with respect to the optimal
stopping problem (\ref{our problem}) on page \pageref{our problem}.
Since $b_1$ is the lower boundary of the continuation region
when $y=i=1$, we can deduce that $b_1<s_0$, and hence $\tau_{s_0}<\tau^*$, 
for $s>s_0$. Therefore,
\begin{align*}
&\ee_{s,0,1}[e^{-\alpha\tau^*}(K-S_{\tau^*})^+\,
\mathbf{1}_{\{\tau_{s_0}<\infty\}}]\\
=&\ee_{s,0,1}
\left[
\ee_{s,0,1}[
e^{-\alpha\tau^*}(K-S_{\tau^*})^+\,\mathbf{1}_{\{\tau_{s_0}<\infty\}}|\mathcal{F}_{\tau_{s_0}}]
\right]\\
=&\ee_{s,0,1}
\left[\rule{0pt}{12pt}\right.
e^{-\alpha\tau_{s_0}}\mathbf{1}_{\{\tau_{s_0}< \infty\}}
\underbrace{
\ee_{s_0,1,1}[e^{-r\tau^*}(K-S_{\tau^*})^+]
}_{V(s_0,1,1)}
\left.\rule{0pt}{12pt}\right],
\end{align*}
by strong Markov property.
So,
simply working out the Laplace transform of the hitting time $\tau_{s_0}$
yields (v1).

Next we verify (v2), for $s>s_0$ fixed.
By Markov Property, we only need to show that
\begin{equation}\label{for supermarti}
\ee_{s,0,1}[e^{-\alpha t}V^*(S_t,Y_t,\eta_t)]\leq V^*(s,0,1),
\end{equation}
for all $t\ge 0$.

Fix $t\ge 0$, and consider
\begin{align}\label{equality for supermart}
e^{-\alpha t}\,V^*(S_t,Y_t,\eta_t)
\;&=\;
e^{-\alpha t}\,V^*(S_t,Y_t,\eta_t)
-
e^{-\alpha(t\wedge\tau_{s_0})}\,
V^*(S_{t\wedge\tau_{s_0}},Y_{t\wedge\tau_{s_0}},\eta_{t\wedge\tau_{s_0}})\nonumber\\
\;&+\;
e^{-\alpha(t\wedge\tau_{s_0})}\,V^*(S_{t\wedge\tau_{s_0}},0,1),
\end{align}
where the last term is justified by $V^*(s_0,1,1)=V^*(s_0,0,1)$.

Now realise that, by strong Markov property,
\begin{align*}
&\;\ee_{s,0,1}[\,{\bf 1}_{\{t\ge\tau_{s_0}\}}\,e^{-\alpha t}\,V^*(S_t,Y_t,\eta_t)\,]\\
\rule{0pt}{20pt}
=\;&\;
\ee_{s,0,1}
\left[\rule{0pt}{12pt}\right.
{\bf 1}_{\{t\ge\tau_{s_0}\}}
\ee_{s,0,1}[\,e^{-\alpha t}\,V^*(S_t,Y_t,\eta_t)
\,|\,{\cal F}_{t\wedge\tau_{s_0}}\,]
\left.\rule{0pt}{12pt}\right]\\
=\;&\;
\int{\bf 1}_{\{t\ge\tau_{s_0}(\omega)\}}\,e^{-\alpha\tau_{s_0}(\omega)}\,
\underbrace{
\ee_{s_0,1,1}[\,e^{-\alpha(t-\tau_{s_0}(\omega))}\,
V^*(S_{t-\tau_{s_0}(\omega)},Y_{t-\tau_{s_0}(\omega)},\eta_{t-\tau_{s_0}(\omega)})\,]
}_{
\mbox{\small\hspace{4cm}$\le\,V^*(s_0,1,1)$ from case (ii)}
}
\,\pp_{s,0,1}(\dd\omega),
\end{align*}
and, since the difference on the right-hand side of (\ref{equality for supermart})
equals
\[
\left[\rule{0pt}{12pt}\right.
e^{-\alpha t}\,V^*(S_t,Y_t,\eta_t)
-
e^{-\alpha\tau_{s_0}}\,V^*(s_0,1,1)]
\left.\rule{0pt}{12pt}\right]
\times
{\bf 1}_{\{t\ge\tau_{s_0}\}},
\]
we obtain that
\begin{equation}\label{ineq for supermart}
\ee_{s,0,1}[\,e^{-\alpha t}\,V^*(S_t,Y_t,\eta_t)\,]
\,\le\,
\ee_{s,0,1}[\,
e^{-\alpha(t\wedge\tau_{s_0})}\,V^*(S_{t\wedge\tau_{s_0}},0,1)\,].
\end{equation}

To prove (\ref{for supermarti}), 
we want to apply It\^o's formula on the above right-hand side
followed by taking expectations.

Recall the operator $L$ introduced on page \pageref{big generator}.
As the function $V^*(\cdot,0,1)$ defined on $(s_0,\infty)$ can be extended 
to a $C^2$-function on $\bfR$, It\^o's formula $\pp_{\!\!s,0,1}$-a.s.\ yields
\[
e^{-\alpha(t\wedge\tau_{s_0})}\,V^*(S_{t\wedge\tau_{s_0}},0,1)
\,=\,
V^*(s,0,1)\,+
\int_{0}^{t\wedge\tau_{s_0}}
e^{-\alpha u}(L-\alpha I)V^*(S_u,0,1)\,\text{d}u\,+\,I_{BM},
\]
where $I_{BM}$ is an integrable stochastic integral against Brownian motion
whose expectation vanishes.

Furthermore,
by explicit calculation, $(L-\alpha I)V^*(s',0,1)=0$, for $s'>s_0$, 
and hence the right-hand side of (\ref{ineq for supermart})
reduces to $V^*(s,0,1)$ eventually showing
(\ref{for supermarti}).

It remains to verify (v3), for any $s>s_0$, in both cases (a) and (b).

For (a), observe that
\[
V^*(s,0,1)\,=\,V(s_0,1,1)\left(\frac{s}{s_0}\right)^{\gamma^-}
\,\ge\,\;
(K-b_0)\left(\frac{s_0}{b_0}\right)^{\gamma^-}
\hspace{-5pt}
\left(\frac{s}{s_0}\right)^{\gamma^-}
\,=\,\;(K-b_0)\left(\frac{s}{b_0}\right)^{\gamma^-},
\] 
and hence it suffices to show that 
$(K-b_0)(s/b_0)^{\gamma^-}\ge(K-s_0)^+$, for $s>s_0$.
Note that $(K-b_0)(\cdot/b_0)^{\gamma^-}$ coincides, on $[b_0,\infty)$,
with the value function $\tilde{V}$ given in Theorem \ref{18}
which satisfies both $\tilde{V}(s)\ge(K-s)^+$, for $s\in[b_0,\infty)$,
and $\tilde{V}'(b_0)=-1$. Therefore, because
$(K-b_0)(\cdot/b_0)^{\gamma^-}$ is a convex function on
$(0,\infty)$, it must be bounded below by $(K-\cdot)^+$ on
$(s_0,b_0)$, too.

For (b), there is nothing to show, if $s_0\ge K$.
But, if $b_0\le s_0<K$, then we know from the proof of Lemma \ref{23} that
$(K-s_0)(s/s_0)^{\gamma^-}>(K-s)^+$, for all $s>s_0$. Thus,
\[
V^*(s,0,1)\,=\,V(s_0,1,1)\left(\frac{s}{s_0}\right)^{\gamma^-}>\,\;(K-s)^+,
\quad\mbox{for}\; s>s_0,
\]
because, in case (b), $V(s_0,1,1)>(K-s_0)^+$ by assumption.
\subsubsection{Proof of Theorem \ref{main} (iii)(c)}
We are going to verify (v1),(v2),(v3) from Section \ref{24},
for fixed $s>s_0$, but using $\tau^*=\tau_{s_0}$.

First, note that, on $\{\tau_{s_0}<\infty\}$, the process
$(S_t, 0\leq t\leq \tau_{s_0})$ under $\pp_{\!\!s,0,1}$ has 
the same distribution as 
the process $(\tilde{S}_t, 0\leq t\leq \tilde{\tau}_{s_0})$ 
under $\tilde{\pp}_{\!\!s}$ introduced in Theorem \ref{18}.
By Lemma \ref{23}, for $s>s_0$, we therefore have
\begin{align*}
&\ee_{s,0,1} [e^{-\alpha\tau_{s_0}}(K-S_{\tau_{s_0}})^+\,
{\bf 1}_{\{\tau_{s_0}<\infty\}}]\\
=\;&
\tilde{\ee}_s [e^{-\alpha\tilde{\tau}_{s_0}}(K-\tilde{S}_{\tilde{\tau}_{s_0}})^+\,
{\bf 1}_{\{\tilde\tau_{s_0}<\infty\}}]
\,=\,
(K-s_0)\bigg(\frac{s}{s_0}\bigg)^{\gamma^-},
\end{align*}
where $V(s_0,1,1)=K-s_0$ by assumption, and (v1) follows.

For (v2), one can copy the corresponding proof in Section \ref{24}
because the value function has  the same form in all sub\,-cases (a,b,c).

Finally, $V^*(s,0,1)=\tilde{V}_0(s)\ge(K-s)^+$, for all $s>s_0$,
showing (v3) and finishing the proof of Theorem \ref{main}(iii).
\subsection{Proof of Theorem \ref{main} (iv)}
As in the proof of Theorem \ref{main}(ii), we first guess the
structure of the stopping region and then verify that the solution of
the corresponding free-boundary value problem is the wanted value
function.

First, as the value function is claimed to be attained at
the Markov time
$\tau_{[b_*,b_0],0}\wedge\tau_{b_1}\wedge\tau_{\eta,0}$,
the stopping region to be guessed should take the form
\[
\left[\rule{0pt}{12pt}\right.
[b_*,b_0]\times\{0\}\times\{1\}
\left.\rule{0pt}{12pt}\right]
\cup
\left[\rule{0pt}{12pt}\right.
(0,b_1]\times \{1\}\times\{1\}
 \left.\rule{0pt}{12pt}\right]
\cup
 \left[\rule{0pt}{12pt}\right.
(0,\infty)\times\{1\}\times\{0\}
\left.\rule{0pt}{12pt}\right],
\]
where $b_0,\,b_1$ are already known, but $b_*\in[s_0,b_0)$ is not.
Recall that both $b_0$ and $b_1$ must be dominated by $K$.

Now, referring to the proof of Theorem \ref{main}(ii) for the
underlying argument, the corresponding free-boundary value problem
for the unknown value function $V(s,0,1)$ is
\[
0\,=\,
\mu_0 s\partial_{1}V(s,0,1)+
\frac{1}{2}\sigma_0^2 s^2\partial_{11}V(s,0,1)
-\alpha V(s,0,1),  \quad\text{ for }s \in (s_0,b_*)\cup(b_0,\infty),
\]
depending on the unknown $b_*$ subject to the boundary conditions
\begin{equation}\label{29}
\begin{array}{rcl}
V(s_0,1,1)&=&V(s_0+,0,1),\\
\rule{0pt}{15pt}
V(b_{*}-,0,1)&=&K-b_*,\\
\rule{0pt}{15pt}
\partial_1 V(b_{*}-,0,1)&=&-1,\\
\rule{0pt}{15pt}
K-b_0&=&V(b_{0}+,0,1),\\
\rule{0pt}{15pt}
-1&=&\partial_1 V(b_{0}+,0,1),\\
\rule{0pt}{15pt}
\lim_{s\to\infty}V(s,0,1)&=&0.
\end{array}
\end {equation}

Taking into account Theorem \ref{18}, if $V(s,0,1)$ satisfies
these constraints, then it must have the representation
\begin{equation}
\begin{cases}
e_1^* s^{\gamma^+}+\;e_2^* s^{\gamma^-} &:\quad s\in  (s_0,b_*)\\
\rule{0pt}{20pt}
K-s &:\quad s\in [b_*, b_0]\cap(s_0,\infty)\\
(K-b_0)\bigg(\dfrac{s}{b_0}\bigg)^{\gamma^-} &:\quad s\in(b_0,\infty) \\
\end{cases}\label{31}
\end{equation}
where $e_1^*,\,e_2^*,\,b_*$ should be determined by the first three conditions
of (\ref{29}).
However, these three conditions might not make $e_1^*,\,e_2^*,\,b_*$ unique.
But, $V$ is the value function associated with an optimal stopping problem,
and this leads to the uniqueness 
stated in the next lemma---see Remark \ref{unique b*}(ii) below.
\begin{lemma}\label{choice b*}
If $s_0<b_0$ and
$(K-s_0)<V(s_0,1,1)< (K-b_0)({s_0}/{b_0})^{\gamma^-}$, 
then there exist
unique coefficients $e_1^*,\,e_2^*$ 
and a unique stopping level $b_*\in(s_0,b_0)$ such that
\begin{align}
V(s_0,1,1)&\,=\,
e_1^* s_0^{\gamma+}+\;e_2^* s_0^{\gamma^-},\label{1000}\\
\rule{0pt}{15pt}
e_1^* b_*^{\gamma^+}+\;e_2^* b_*^{\gamma^-}&\,=\,K-b_*,\label{40}\\
\rule{0pt}{15pt}
e_1^*\gamma^+ b_*^{\gamma^+}+\;e_2^*\gamma^- b_*^{\gamma^-}&\,=\,-b_*,\nonumber\\
\rule{0pt}{15pt}
e_1^* s^{\gamma^+}+\;e_2^* s^{\gamma^-}&\,>\,
K-s, \quad \text{for } s\in (s_0,b_*).\nonumber
\end{align}
\end{lemma}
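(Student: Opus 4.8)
The plan is to reduce the system (\ref{1000})--(the last line) to a single scalar equation for the free boundary $b$, solve it by a shooting/intermediate-value argument, and obtain the strict inequality in the lemma automatically from a monotonicity property of the candidate functions in $b$. First I would introduce the one-parameter family of candidates: for each $b\in(s_0,b_0)$ the two constraints $e_1 b^{\gamma^+}+e_2 b^{\gamma^-}=K-b$ and $e_1\gamma^+ b^{\gamma^+}+e_2\gamma^- b^{\gamma^-}=-b$ form a linear system for $(e_1,e_2)$ with determinant $(\gamma^--\gamma^+)b^{\gamma^++\gamma^-}\neq 0$ (the roots of (\ref{30}) are distinct), so there are real-analytic maps $b\mapsto e_1(b),\,e_2(b)$, and I set $\phi_b(s):=e_1(b)s^{\gamma^+}+e_2(b)s^{\gamma^-}$. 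By construction $\phi_b$ is the unique solution of $\tfrac12\sigma_0^2 s^2\phi''+\mu_0 s\phi'-\alpha\phi=0$ with $\phi_b(b)=K-b$, $\phi_b'(b)=-1$, and $(b,s)\mapsto\phi_b(s)$ is jointly smooth. Condition (\ref{1000}) then reads $g(b):=\phi_b(s_0)=V(s_0,1,1)$.

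The key technical point is the sign of $\phi_b''(b)$. Evaluating the ODE at $s=b$ gives $\tfrac12\sigma_0^2 b^2\phi_b''(b)=\alpha K-(\alpha-\mu_0)b$, which is strictly positive for $b\in(0,b_0]$: if $\alpha\le\mu_0$ it is $\ge\alpha K>0$, and if $\alpha>\mu_0$ then, since $b\le b_0=-\gamma^-K/(1-\gamma^-)$, it is bounded below by $\alpha K-(\alpha-\mu_0)b_0=\tfrac{K}{1-\gamma^-}(\alpha-\mu_0\gamma^-)$, which is positive because the identity $\alpha-\mu_0\gamma^-=\tfrac12\sigma_0^2\gamma^-(\gamma^--1)$ (read off (\ref{30})) has a positive right-hand side. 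Next I would differentiate the defining relations of $\phi_b$ in $b$: writing $\psi_b:=\partial_b\phi_b$, one gets $\tfrac12\sigma_0^2 s^2\psi_b''+\mu_0 s\psi_b'-\alpha\psi_b=0$, $\psi_b(b)=0$, and $\psi_b'(b)=-\phi_b''(b)<0$. Since $\psi_b$ solves the ODE and vanishes at $b$, it must be of the form $\psi_b(s)=A(b)\,s^{\gamma^-}\bigl(s^{\gamma^+-\gamma^-}-b^{\gamma^+-\gamma^-}\bigr)$, and a short computation shows $\psi_b'(b)$ has the sign of $A(b)$; hence $A(b)<0$, and therefore $\psi_b(s)>0$ for every $s<b$.

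With this in hand the remaining steps are routine. Because $g'(b)=\psi_b(s_0)>0$ on $(s_0,b_0)$, the function $g$ is strictly increasing there; continuity of $e_1(\cdot),e_2(\cdot)$ up to the endpoints gives $g(b)\to\phi_{s_0}(s_0)=K-s_0$ as $b\downarrow s_0$, while as $b\uparrow b_0$ uniqueness of the ODE solution with prescribed value and slope at $b_0$ forces $\phi_{b_0}$ to coincide with $(K-b_0)(\fatdot/b_0)^{\gamma^-}$ (Theorem \ref{18} and Remark \ref{meaning b0}(i)), so $g(b)\to(K-b_0)(s_0/b_0)^{\gamma^-}$. The hypothesis $K-s_0<V(s_0,1,1)<(K-b_0)(s_0/b_0)^{\gamma^-}$ thus places the target strictly inside the (open) range of $g$, so there is exactly one $b_*\in(s_0,b_0)$ with $g(b_*)=V(s_0,1,1)$, and then $e_1^*:=e_1(b_*)$, $e_2^*:=e_2(b_*)$ are the unique matching coefficients. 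Finally, the last inequality of the lemma follows from the very same monotonicity: for fixed $s\in(s_0,b_*)$ the map $b\mapsto\phi_{b}(s)$ is strictly increasing on $[s,b_0)$ (derivative $\psi_b(s)>0$) with value $\phi_s(s)=K-s$ at $b=s$, so $\phi_{b_*}(s)=e_1^*s^{\gamma^+}+e_2^*s^{\gamma^-}>K-s$.

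I expect the main obstacle to be exactly the sign analysis of $\phi_b''(b)$: this is the one place where the specific value $b_0=-\gamma^-K/(1-\gamma^-)$ and the algebra of equation (\ref{30}) must be used, and it is simultaneously what drives the monotonicity of $g$ (hence uniqueness of $b_*$) and rules out the candidate function crossing the payoff inside $(s_0,b_*)$. Everything downstream of that estimate — the linear-system inversion, the endpoint limits, and the intermediate-value step — is bookkeeping.
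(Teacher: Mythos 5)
Your proof is correct, but it takes a genuinely different route from the paper's. The paper parametrises candidates probabilistically, setting $\Gamma(s,\tilde s)=V(s_0,1,1)\phi_1(s,\tilde s)+(K-\tilde s)\phi_2(s,\tilde s)$ with the hitting-time functionals of Lemma \ref{hitting times}; this builds in the boundary value at $s_0$ and value matching at $\tilde s$, and the free boundary is then found by applying the intermediate value theorem to $s\mapsto\partial_1\Gamma(s,s)$ between the crossing point $s_1$ of Lemma \ref{nullstellen} (where that slope is shown to be $<-1$) and $b_0$ (where it is $>-1$). You instead build in value matching and smooth pasting at $b$ and shoot back to $s_0$, applying the intermediate value theorem to $g(b)=\phi_b(s_0)$. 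The decisive difference is your comparative-statics computation: the sign of $\phi_b''(b)$, read off the ODE together with the explicit value of $b_0$ and the algebra of (\ref{30}), gives $\partial_b\phi_b(s)>0$ for $s<b\le b_0$, and this single monotonicity yields existence, \emph{uniqueness}, and the final strict inequality all at once. The paper does not prove uniqueness inside the lemma's proof at all --- it is deferred to Remark \ref{unique b*}(ii), where it is deduced a posteriori from the uniqueness of the value function after the verification step --- and the final inequality $e_1^*s^{\gamma^+}+e_2^*s^{\gamma^-}>K-s$ is obtained there by a separate convexity and inflection-point analysis requiring the positivity of both coefficients. Your argument is therefore more self-contained on these two points (I checked the key estimate: $\tfrac12\sigma_0^2b^2\phi_b''(b)=\alpha K-(\alpha-\mu_0)b\ge \tfrac{K}{1-\gamma^-}(\alpha-\mu_0\gamma^-)=\tfrac{K}{1-\gamma^-}\cdot\tfrac12\sigma_0^2\gamma^-(\gamma^--1)>0$ for $b\le b_0$, which is sound). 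What the paper's construction buys in exchange is the stochastic representation $\Gamma(s,b_*)=V(s_0,1,1)\phi_1(s,b_*)+(K-b_*)\phi_2(s,b_*)$, which is reused verbatim in the verification of (v1) in the proof of Theorem \ref{main}(iv); with your construction that identity would still have to be established separately at that later stage.
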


Before proving this lemma, we are going to state the following 
preparatory results.
\begin{lemma}(See \cite[Lemma 2]{lehoczky} for example.)\label{hitting times}
For fixed $s_0\leq s\leq\tilde{s}$,
\begin{equation}
\phi_1(s,\tilde{s})\stackrel{\mbox{\tiny def}}{=}
\ee_{s,0,1}[e^{-\alpha(\tau_{s_0}\wedge \tau^{\tilde{s}})}
\mathbf{1}_{\{\tau_{s_0}<\tau^{\tilde{s}}\}}]
=
\dfrac{s^{\gamma^+}\tilde{s}_{\rule{0pt}{3pt}}^{\,\gamma^-}
-s^{\gamma^-}\tilde{s}_{\rule{0pt}{3pt}}^{\,\gamma^+}}
{s_0^{\gamma^+}\tilde{s}_{\rule{0pt}{3pt}}^{\,\gamma^-}
-s_0^{\gamma^-}\tilde{s}_{\rule{0pt}{3pt}}^{\,\gamma^+}},\nonumber
\end{equation} and 
\begin{equation}
\phi_2(s,\tilde{s})\stackrel{\mbox{\tiny def}}{=}
\ee_{s,0,1}[e^{-\alpha(\tau_{s_0}\wedge \tau^{\tilde{s}})}
\mathbf{1}_{\{\tau_{s_0}>\tau^{\tilde{s}}\}}]
=
\dfrac{s_0^{\gamma^+}s_{\rule{0pt}{3pt}}^{\gamma^-}
-s_0^{\gamma^-}s_{\rule{0pt}{3pt}}^{\gamma^+}}
{s_0^{\gamma^+}\tilde{s}_{\rule{0pt}{3pt}}^{\,\gamma^-}
-s_0^{\gamma^-}\tilde{s}_{\rule{0pt}{3pt}}^{\,\gamma^+}},\nonumber
\end{equation}
where
\begin{equation*}
\tau^{\tilde{s}} \stackrel{\mbox{\tiny def}}{=} \inf\{t\ge 0:S_t \geq \tilde{s} \}.
\end{equation*}
\end{lemma}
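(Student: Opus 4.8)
The plan is to recognise $\phi_1$ and $\phi_2$ as the values of two discounted one-sided exit problems for the geometric Brownian motion that governs $(S_t)$ before it leaves the interval $(s_0,\tilde{s})$, and to evaluate them by a standard optional-stopping argument based on the two fundamental solutions $s\mapsto s^{\gamma^+}$ and $s\mapsto s^{\gamma^-}$ of the ordinary differential equation $(\tilde{L}-\alpha I)f=0$. First I would recall that, under $\pp_{\!\!s,0,1}$ with $s\in[s_0,\tilde{s}]$, the component $Y_t$ stays at $0$ and $\eta_t$ stays at $1$ for as long as $S_t>s_0$, by Remark \ref{explain generator}(ii); hence, on the random interval $[0,\tau_{s_0}\wedge\tau^{\tilde{s}}]$ the price $(S_t)$ has exactly the dynamics of the geometric Brownian motion with generator $\tilde{L}f=\mu_0 s f'+\tfrac12\sigma_0^2 s^2 f''$ from Theorem \ref{18}. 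Writing $\tau\stackrel{\mbox{\tiny def}}{=}\tau_{s_0}\wedge\tau^{\tilde{s}}$, I would note that $\tau<\infty$ holds $\pp_{\!\!s,0,1}$-a.s., because the diffusion coefficient $\sigma_0^2 S_t^2$ is bounded below by a positive constant on the compact set $[s_0,\tilde{s}]$, so the process cannot remain trapped in $(s_0,\tilde{s})$ forever; moreover, by path-continuity the two events $\{\tau_{s_0}<\tau^{\tilde{s}}\}$ and $\{\tau_{s_0}>\tau^{\tilde{s}}\}$ partition $\{\tau<\infty\}$ up to a null set, with $S_\tau=s_0$ on the first and $S_\tau=\tilde{s}$ on the second.

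Next, by direct substitution one checks that $s^{\gamma^+}$ and $s^{\gamma^-}$ each solve $(\tilde{L}-\alpha I)f=0$, precisely because $\gamma^+$ and $\gamma^-$ are the two roots of the quadratic (\ref{30}). Since these functions are smooth, the classical It\^o formula applied to $f(s)=s^{\gamma^\pm}$ shows that the drift term cancels and $(e^{-\alpha(t\wedge\tau)}f(S_{t\wedge\tau}),\,t\ge 0)$ is a local martingale. As $f$ is continuous on the compact interval $[s_0,\tilde{s}]$ and $e^{-\alpha(t\wedge\tau)}\le 1$, this stopped process is bounded, hence a uniformly integrable martingale. Optional stopping at $\tau$, together with the boundary values of $S_\tau$ recorded above, then yields for each choice of exponent the identity
\[
s^{\gamma^\pm}\,=\,s_0^{\gamma^\pm}\,\phi_1(s,\tilde{s})\,+\,\tilde{s}^{\,\gamma^\pm}\,\phi_2(s,\tilde{s}).
\]

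Finally I would read the two resulting relations as a $2\times 2$ linear system in the unknowns $\phi_1(s,\tilde{s})$ and $\phi_2(s,\tilde{s})$ and solve it by Cramer's rule. Its determinant $s_0^{\gamma^+}\tilde{s}^{\,\gamma^-}-s_0^{\gamma^-}\tilde{s}^{\,\gamma^+}$ is non-zero whenever $s_0<\tilde{s}$, since $\gamma^+\neq\gamma^-$ makes $s^{\gamma^+}$ and $s^{\gamma^-}$ linearly independent on any nondegenerate interval; dividing the two cofactor determinants by it reproduces exactly the two quotients claimed in the lemma. The degenerate boundary values, namely $\phi_1=1,\phi_2=0$ at $s=s_0$ and $\phi_1=0,\phi_2=1$ at $s=\tilde{s}$, are then recovered by direct inspection of the formulas. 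The only genuinely non-routine point in this argument is justifying the optional-stopping step, that is, the almost-sure finiteness of $\tau$ combined with the boundedness of the stopped martingale; once that is in place, everything else reduces to the substitution into (\ref{30}) and elementary linear algebra.
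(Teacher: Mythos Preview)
Your argument is correct and is exactly the standard optional-stopping derivation of the two-sided Laplace-transform exit formulas for geometric Brownian motion. Note that the paper does not actually prove this lemma; it merely states it with a reference to \cite[Lemma 2]{lehoczky}, so there is no in-paper proof to compare against, but your approach is precisely the one that underlies that cited result.
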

\begin{lemma}\label{nullstellen}
Under the assumptions of Lemma \ref{choice b*}, the equation
\begin{equation}
V(s_0,1,1)\bigg(\dfrac{s}{s_0}\bigg)^{\gamma^-}=\;K-s\nonumber 
\end{equation}
has exactly two solutions $s_1,s_2\in(s_0,K)$
only one of which, say $s_1$, is less than $b_0$.
\end{lemma}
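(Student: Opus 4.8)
The plan is to reduce the statement to an elementary convexity argument. I would set $g(s)\stackrel{\mbox{\tiny def}}{=}V(s_0,1,1)\,(s/s_0)^{\gamma^-}-(K-s)$ for $s>0$, so that the solutions of the displayed equation are exactly the zeros of $g$. First I would record that $V(s_0,1,1)>0$: since $s_0<b_0$ and, as noted in the proof of Theorem~\ref{main}(iv) (or directly from $b_0=-\gamma^-K/(1-\gamma^-)$ together with $\gamma^-<0$), one has $b_0<K$, hence $K-s_0>0$, and the left hypothesis of Lemma~\ref{choice b*} gives $V(s_0,1,1)>K-s_0>0$.

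Next I would observe that $g$ is \emph{strictly convex} on $(0,\infty)$. Writing $g(s)=V(s_0,1,1)\,s_0^{-\gamma^-}\,s^{\gamma^-}-K+s$, one computes $g''(s)=V(s_0,1,1)\,s_0^{-\gamma^-}\,\gamma^-(\gamma^--1)\,s^{\gamma^--2}$, which is strictly positive because $\gamma^-<0$ forces $\gamma^-(\gamma^--1)>0$ and $V(s_0,1,1)>0$. A strictly convex function of one real variable has at most two zeros, so it remains to produce two sign changes.

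Then I would evaluate $g$ at the three points $s_0$, $b_0$ and $K$. At $s_0$, $g(s_0)=V(s_0,1,1)-(K-s_0)>0$ by the left hypothesis. At $b_0$, multiplying the right hypothesis $V(s_0,1,1)<(K-b_0)(s_0/b_0)^{\gamma^-}$ by the positive number $(b_0/s_0)^{\gamma^-}$ yields $g(b_0)=V(s_0,1,1)(b_0/s_0)^{\gamma^-}-(K-b_0)<0$. At $K$, $g(K)=V(s_0,1,1)(K/s_0)^{\gamma^-}>0$ since $V(s_0,1,1)>0$. By the intermediate value theorem $g$ has a zero $s_1\in(s_0,b_0)$ and a zero $s_2\in(b_0,K)$; these are distinct (one below $b_0$, one above) and, by strict convexity, are the only zeros of $g$. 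Hence the equation has exactly the two solutions $s_1<b_0<s_2$ in $(s_0,K)$, as claimed.

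There is no genuine obstacle here. The only points requiring a little care are the positivity of $V(s_0,1,1)$, on which the convexity of $g$ depends, and keeping track of the sign of the exponent $\gamma^-<0$ when manipulating the inequality at $b_0$ — in particular that $(s_0/b_0)^{\gamma^-}>1$ whereas $(b_0/s_0)^{\gamma^-}<1$. Everything else is the intermediate value theorem combined with the at-most-two-zeros property of a strictly convex function.
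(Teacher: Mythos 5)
Your proof is correct and follows essentially the same route as the paper: the same auxiliary function, the same sign evaluations at $s_0$, $b_0$, $K$, the Intermediate Value Theorem for existence, and strict convexity for the at-most-two-zeros claim. You merely spell out two details the paper leaves implicit (the positivity of $V(s_0,1,1)$ and the explicit second-derivative computation), which is fine.
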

\begin{proof}
Consider the function
$f(\cdot)= V(s_0,1,1)(\cdot/s_0)^{\gamma^-}-(K-\cdot)$ on $(0,\infty)$,
and note that $f(s_0)>0$, $f(b_0) <0$, $f(K)>0$. 
By Intermediate Value Theorem, 
there exist $s_0<s_1<b_0$ and $b_0<s_2<K$ such that
$f(s_1)=f(s_2)=0$, that is,
\begin{equation*}
V(s_0,1,1)\bigg(\dfrac{s_i}{s_0}\bigg)^{\gamma^-}=\;K-s_i,
\quad\text{for }i=1,2.\label{39}
\end{equation*}
There exist exactly two points, only, because 
$V(s_0,1,1)(\cdot/s_0)^{\gamma^-}$ is strictly convex on $(0,\infty)$ 
and can be intersected by a line at no more than two points. 
Moreover,  
\begin{equation}
V(s_0,1,1)\bigg(\dfrac{s}{s_0}\bigg)^{\gamma^-}<\;K-s,\label{1} 
\quad\text{for } s_1<s<s_2,
\end{equation}
because of strict convexity, too.
\end{proof}
\begin{proof}[Proof of {\bf Lemma \ref{choice b*}}]
Introduce
\begin{equation}
\Gamma(s,\tilde{s})\stackrel{\mbox{\tiny def}}{=}
V(s_0,1,1)\phi_1(s,\tilde{s})+(K-\tilde{s})\phi_2(s,\tilde{s}),
\quad\mbox{for} s_0\le s\le\tilde{s},\nonumber
\end{equation} 
using the functions $\phi_1,\,\phi_2$ defined in Lemma \ref{hitting times}.
For fixed $\tilde{s}>s_0$, note that the function
$[s_0,\tilde{s}\,]\ni s\mapsto\Gamma(s,\tilde{s})$
is of the form $e_1\,s^{\gamma^+}\!+e_2\,s^{\gamma^-}$ with
\begin{equation*}
e_1\,=\,
\frac{V(s_0,1,1)\tilde{s}_{\rule{0pt}{3pt}}^{\,\gamma^-}
-(K-\tilde{s})s_0^{\gamma^-}}
{s_0^{\gamma^+}\tilde{s}_{\rule{0pt}{3pt}}^{\,\gamma^-}
-s_0^{\gamma^-}\tilde{s}_{\rule{0pt}{3pt}}^{\,\gamma^+}},
\quad
e_2\,=\,
\frac{-V(s_0,1,1)\tilde{s}_{\rule{0pt}{3pt}}^{\,\gamma^+}
+(K-\tilde{s})s_0^{\gamma^+}}
{s_0^{\gamma^+}\tilde{s}_{\rule{0pt}{3pt}}^{\,\gamma^-}
-s_0^{\gamma^-}\tilde{s}_{\rule{0pt}{3pt}}^{\,\gamma^+}},
\end{equation*}
and that both boundary conditions
\[
e_1\,s_0^{\gamma^+}+\,e_2\,s_0^{\gamma^-}
=\,
\Gamma(s_0,\tilde{s})
\,=\,
V(s_0,1,1)
\quad\&\quad
e_1\,\tilde{s}_{\rule{0pt}{3pt}}^{\,\gamma^+}
+\,e_2\,\tilde{s}_{\rule{0pt}{3pt}}^{\,\gamma^-}
=\,
\Gamma(\tilde{s},\tilde{s})
\,=\,
K-\tilde{s}
\]
are satisfied.
Hence, if we can show that there is exactly one $b_*\in(s_0,b_0)$ such that both
\[
\partial_1\Gamma(b_*,b_*)=-1
\quad\mbox{and}\quad
\Gamma(s,b_*)>K-s,\;s\in(s_0,b_*),
\]
then the triplet $(e_1^*,e_2^*,b_*)$, 
where $e_1^*,e_2^*$ are given by the above formulae for $e_1,e_2$
when replacing $\tilde{s}$ by $b_*$,
would be the unique solution 
of the problem stated in Lemma \ref{choice b*}.
Here the uniqueness of $e_1^*,e_2^*$ follows from the uniqueness of $b_*$
as the formulae for $e_1^*,e_2^*$ coincide with the unique solution to
the sub\,-system (\ref{1000}),(\ref{40}) of the conditions in Lemma \ref{choice b*} 
when treating $s_0^{\gamma^\pm}$ and $b_*^{\gamma^\pm}$ as coefficients.

First, we show that that there is $b_*\in(s_0,b_0)$ such that
$\partial_1\Gamma(b_*,b_*)=-1$. For the uniqueness of $b_*$
we refer to Remark \ref{unique b*}(ii) below.

Using simple calculations based on It\^o's formula,
observe that, for any $s\ge s_0$, the stochastic process
$(e^{-\alpha(t\wedge \tau_{s_0})}
V(s_0,1,1)(S_{t\wedge \tau_{s_0}}/s_0)^{\gamma^-},\,t\geq 0)$
is a $\pp_{\!\!s,0,1}$-martingale.
Now, recall $s_1$ from Lemma \ref{nullstellen} and the stochastic representation
of $\Gamma(s,\tilde{s})$ in terms of $\phi_1,\phi_2$.
Then, by Doob's Optional Sampling Theorem, for $s_0\leq s\leq s_1$,
\begin{align*}
V(s_0,1,1)\bigg(\dfrac{s}{s_0}\bigg)^{\gamma^-} 
&=\;\ee_{s,0,1}[\;e^{-\alpha(\tau_{s_0}\wedge\tau^{s_1})}
V(s_0,1,1)\bigg(\dfrac{S_{\tau_{s_0}\wedge\tau^{s_1}}}{s_0}\bigg)^{\gamma^-}\!]\nonumber\\
&=\; V(s_0,1,1)\phi_1(s,s_1)+V(s_0,1,1)\bigg(\dfrac{s_1}{s_0}\bigg)^{\gamma^-}\phi_2(s,s_1)\\
&=\;V(s_0,1,1)\phi_1(s,s_1)+(K-s_1)\phi_2(s,s_1)\\
\rule{0pt}{20pt}
&=\;\Gamma(s,s_1),
\end{align*} 
using Lemma \ref{nullstellen} to justify the penultimate equality above.
Thus,
\begin{equation}\label{derivative at s_1}
\partial_1\Gamma(s_1,s_1)
\,=\,
\partial_s\left[
V(s_0,1,1)\bigg(\dfrac{s}{s_0}\bigg)^{\gamma^-}
\right]_{s=s_1}.
\end{equation}

Next, for any $s>s_0$,
\[
\partial_1\Gamma(s,s)
\,=\,
\frac{
V\,s^{\gamma^+ +\gamma^- -1}\,(\gamma^+ -\gamma^-)
+(K-s)\,s^{-1}\,(
\gamma^- s_0^{\gamma^+}{s}_{\rule{0pt}{3pt}}^{\,\gamma^-}
-\gamma^+ s_0^{\gamma^-}{s}_{\rule{0pt}{3pt}}^{\,\gamma^+}
)
}{
s_0^{\gamma^+}{s}_{\rule{0pt}{3pt}}^{\,\gamma^-}
-\,s_0^{\gamma^-}{s}_{\rule{0pt}{3pt}}^{\,\gamma^+}
},
\]
where $V$ stands for $V(s_0,1,1)$. Consider the above right-hand side
as a function of $(V,s)$ which we denote by $g(V,s)$ in what follows.

Choose $s=b_0$ and realise that the function $V\mapsto g(V,b_0)$
is strictly decreasing, since $s_0<b_0$ implies
$s_0^{\gamma^+}{b_0}_{\rule{0pt}{3pt}}^{\,\gamma^-}
-\,s_0^{\gamma^-}{b_0}_{\rule{0pt}{3pt}}^{\,\gamma^+}<0$.
Moreover $g((K-b_0)(s_0/b_0)^{\gamma^-},b_0)=-1$, so that
our assumption of
$V(s_0,1,1)< (K-b_0)(s_0/b_0)^{\gamma^-}$
implies
$g(V(s_0,1,1),b_0)>-1$.

But, using (\ref{derivative at s_1}), we also have that
\[
g(V(s_0,1,1),s_1)
\,=\,
\partial_s\left[
V(s_0,1,1)\bigg(\dfrac{s}{s_0}\bigg)^{\gamma^-}
\right]_{s=s_1},
\]
where, by the same arguments used in the proof of Lemma \ref{nullstellen},
the above right-hand side must be less than $-1$.

All in all, we obtain that $g(V(s_0,1,1),s_1)<-1<g(V(s_0,1,1),b_0)$.
And since the function $g(V(s_0,1,1),\cdot)$ 
is continuous on $(s_0,\infty)$,
and since $s_0<s_1<b_0$ by Lemma \ref{nullstellen}, 
it is again a consequence of the Intermediate Value Theorem that
there exists $b_*\in(s_1,b_0)$ such that 
$\partial_1\Gamma(b_*,b_*)=g(V(s_0,1,1),b_*)=-1$.

Second, finishing the proof of Lemma \ref{choice b*},
we will show that
\[
\Gamma(s,b_*)\,=\,
e_1^* s^{\gamma^+}+\,e_2^* s^{\gamma^-}
>\,K-s,
\quad \text{for } s\in (s_0,b_*),
\]
where
\[
e_1^*\,=\,
\frac{V(s_0,1,1)b_*^{\,\gamma^-}
-(K-b_*)s_0^{\gamma^-}}
{s_0^{\gamma^+}b_*^{\,\gamma^-}
-s_0^{\gamma^-}b_*^{\,\gamma^+}},
\quad
e_2^*\,=\,
\frac{-V(s_0,1,1)b_*^{\,\gamma^+}
+(K-b_*)s_0^{\gamma^+}}
{s_0^{\gamma^+}b_*^{\,\gamma^-}
-s_0^{\gamma^-}b_*^{\,\gamma^+}}.
\]

In order to do so, we analyse the function
$f(s)=e_1^*\,s^{\gamma^+}\!+e_2^*\,s^{\gamma^-},\,s>0$.
Since $f'(b_*)=-1$ has already been shown,
$f(s)>K-s,\,s\in(s_0,b_*)$, would follow from $f$ being strictly convex
on $(0,b_*)$ which we are going to prove below.

First, observe that both $e_1^*$ and $e_2^*$ are positive. In fact,
as $s_0<b_*$ implies that the denominator 
$s_0^{\gamma^+}b_*^{\,\gamma^-}\!-s_0^{\gamma^-}b_*^{\,\gamma^+}$
is negative,
the positivity of the two coefficients $e_1^*,e_2^*$ follows from
\[
V(s_0,1,1)b_*^{\,\gamma^-}-(K-b_*)s_0^{\gamma^-}<0
\quad\mbox{and}\quad
-V(s_0,1,1)b_*^{\,\gamma^+}+(K-b_*)s_0^{\gamma^+}<0,
\]
where the former inequality is a consequence of (\ref{1}),
because $b_*$ was chosen from the interval $(s_1,b_0)$,
while the latter inequality is a consequence of our assumption
$K-s_0<V(s_0,1,1)$, on the one hand hand, and 
$s_0<b_*,\,\gamma^+>0$, on the other.

As a consequence, if $\gamma^+\ge 1$, then $f$ is the sum of two
strictly convex functions, and hence strictly convex everywhere.

Now, assume $0<\gamma^+<1$ which is the remaining case. 
Note that $f$ has exactly one local minimum at 
$s'=[\frac{-e_2^*\gamma^-}{e_1^*\gamma^+}]^{\frac{1}{\gamma^+-\gamma^-}}$,
and that this minimum is global because 
$f(0+)=\lim_{s\to\infty}f(s)=\infty$.

Thus, $f$ is strictly decreasing on $(0,s')$ and strictly increasing
on $(s',\infty)$ which implies $b_*<s'$ because $f'(b_*)=-1$.

Finally, $f$ has exactly one point of inflection at
$s''=[
\frac{-e_2^*\gamma^-(\gamma^- -1)}{e_1^*\gamma^+(\gamma^+ -1)}
]^{\frac{1}{\gamma^+-\gamma^-}}$,
and this point satisfies $s''>s'$. 
Hence,
$f$ must be at least strictly convex on $(0,s')$ 
which also proves its strict convexity on $(0,b_*)\subseteq(0,s')$.
\end{proof}
\begin{remark}\rm\label{unique b*}
\begin{itemize}
\item[(i)]
In the case of $K-s_0=V(s_0,1,1)$,
the choice of $b_*$ has not been discussed yet. In this case, 
we claim that $b_*=s_0$,
and we will verify below that the corresponding function 
given by (\ref{31}) coincides with $V(\cdot,0,1)$ on $(s_0,\infty)$.
\item[(ii)]
In the case of
$(K-s_0)<V(s_0,1,1)< (K-b_0)(s_0/b_0)^{\gamma^-}$,
we showed existence of $b_*\in(s_0,b_0)$,
and any such $b_*$ uniquely determines a function as given by (\ref{31}).
We will verify below that any such function 
coincides with $V(\cdot,0,1)$ on $(s_0,\infty)$.
Moreover, as shown in the proof of Lemma \ref{choice b*},
for any choice of $b_*$, the corresponding function given by (\ref{31}) must be
strictly convex on $(s_0,b_*)$. 
Hence, as the value function with respect to an optimal stopping problem is unique,
there can only be one $b_*$.
\end{itemize}
\end{remark}

Set $\tau^*=\tau_{[b_*,b_0],0}\wedge\tau_{b_1}\wedge\tau_{\eta,0}$,
and introduce:
\[
V^*(s,y,i)\,\stackrel{\mbox{\tiny def}}{=}\,
\left\{\begin{array}{lcl}
e_1^* s^{\gamma^+}+\;e_2^* s^{\gamma^-} 
&:&  s\in  (s_0,b_*),y=0,i=1\\
\rule{0pt}{20pt}
K-s &:&  s\in [b_*, b_0]\cap(s_0,\infty),y=0,i=1\\
(K-b_0)\left(\dfrac{s}{b_0}\right)^{\gamma^-} 
&:& s\in(b_0,\infty),y=0,i=1 \\
\rule{0pt}{15pt}
V(s,1,i)&:&s>0,y=1,i\in\{0,1\}
\end{array}\right.
\]
Again, by verifying  the conditions (v1),(v2),(v3) stated 
at the beginning of Section \ref{24}
for any fixed $s>s_0$,
we complete both the program set out in Remark \ref{unique b*}
and the proof of Theorem \ref{main}(iv).

For (v1), if $s\in[b_*,b_0]$, then there is nothing to prove,
as one stops immediately, and if $s>b_0$, then 
(v1) follows from Theorem \ref{18}, as $V^*(\cdot,0,1)$ coincides 
with $\tilde{V}$ on $(b_0,\infty)$. 

In the remaining case of $s_0<s<b_*$, observe that the assumptions
of Lemma \ref{choice b*} are satisfied, as $K-s_0=V(s_0,1,1)$ can be ruled out.
Furthermore,
\begin{align*}
&\ee_{s,0,1}[e^{-\alpha\tau_{[b_*,b_0],0}\wedge\tau_{b_1}\wedge\tau_{\eta,0}}
(K-S_{\tau_{[b_*,b_0],0}\wedge\tau_{b_1}\wedge\tau_{\eta,0}})^+]\\
=&
\ee_{s,0,1}[e^{-\alpha(\tau_{b_1}\wedge\tau_{\eta,0})}
(K-S_{\tau_{b_1}\wedge\tau_{\eta,0}})^+
\mathbf{1}_{\{\tau_{[b_*,b_0],0}>\tau_{s_0}\}}]
+
\ee_{s,0,1}[e^{-\alpha\tau_{b_*}}(K-b_*)^+\mathbf{1}_{\{\tau_{[b_*,b_0],0}<\tau_{s_0}\}}]\\
=&
\ee_{s,0,1}
\left[\rule{0pt}{12pt}\right.
\ee_{s,0,1}[\,e^{-\alpha(\tau_{b_1}\wedge\tau_{\eta,0})}
(K-S_{\tau_{b_1}\wedge\tau_{\eta,0}})^+
\mathbf{1}_{\{\tau_{[b_*,b_0],0}>\tau_{s_0}\}}|\mathcal{F}_{\tau_{s_0}}\,]
\left.\rule{0pt}{12pt}\right]
+
(K-b_*)\,\phi_2(s,b_*)
\end{align*}
which, by strong Markov property, simplifies to
\[
V(s_0,1,1)\,\phi_1(s,b_*)
+
(K-b_*)\,\phi_2(s,b_*).
\]
Using the definition of $\Gamma$ given at the beginning
of the proof of Lemma \ref{choice b*} on page \pageref{choice b*},
the last expression equals $\Gamma(s,b_*)$.
And since $b_*$ was chosen such that
$\Gamma(s,b_*)=V^*(s,0,1)$, condition (v1)
follows in the remaining case, too.

For (v2), realise that,
by the same arguments used in Section \ref{24} on page \pageref{ineq for supermart},
the above candidate value function of this section satisfies 
(\ref{ineq for supermart}),
and hence we only need to show that
\begin{equation}\label{sufficient for v2}
\ee_{s,0,1}[\,
e^{-\alpha(t\wedge\tau_{s_0})}\,V^*(S_{t\wedge\tau_{s_0}},0,1)\,]
\,\le\,
V^*(s,0,1),
\end{equation}
for any $t\ge 0$.

Now, consider the function
$\phi(s')=V^*(s',0,1),\,s'>s_0$.
Note that $\phi$ can be extended to a function on $(0,\infty)$
which is of the type described in Remark \ref{our ito}
having finitely many exceptional points of insufficient smoothness.
In the case of $K-s_0=V(s_0,1,1)$, where $b_*=s_0$, 
there is one exceptional point at $b_0$,
whereas in the case of $K-s_0<V(s_0,1,1)$ there are two exceptional points
at $b_*,b_0$.
However, in both cases, 
applying Theorem \ref{18} and Lemma \ref{choice b*}, respectively,
one can extend $\phi$ in such a way that $\phi'$ is continuous.

Thus, according to Remark \ref{our ito}, 
we $\pp_{\!\!s,0,1}$-a.s.\ have
\begin{equation}\label{ito case iv}
e^{-\alpha(t\wedge\tau_{s_0})}V^*(S_{t\wedge\tau_{s_0}},0,1)
\,=\,
V^*(s,0,1)+\int^{t\wedge\tau_{s_0}}_0
e^{-\alpha u}(L-\alpha I)V^*(S_u,0,1)\text{d}u+I_{BM},
\end{equation}
where $L$ stands once more for the operator 
introduced on page \pageref{big generator},
and $I_{BM}$ is an integrable stochastic integral against Brownian motion
whose expectation vanishes.
 
Next, by explicit calculation, 
$(L-\alpha I)V^*(\cdot,0,1)\le 0$ on $(s_0,b_*)\cup(b_*,b_0)\cup(b_0,\infty)$.
Since the process $(S_u,\,u\in[0,\tau_{s_0}))$ has $\pp_{\!\!s,0,1}$\,-\,a.s.\
no occupation time in $\{b_*,b_0\}$,
inequality (\ref{sufficient for v2}) follows from (\ref{ito case iv}) 
by taking expectations, proving (v2).

Finally, condition (v3) is a consequence 
of Lemma \ref{choice b*}, if $s\in(s_0,b_*)$,
and of Theorem \ref{18}, if $s\in(b_0,\infty)$.
Otherwise, there is nothing to show.
\section {Numerical Analysis and Discussion}\label{numerical}
We are going to discuss the four cases (iii)(a-c) and (iv) of Theorem \ref{main}
using practically relevant values for 
$s_0,\mu_0,\sigma_0,\mu_1$, $\sigma_1,\alpha,\lambda,K$.

Note that the choice of $\mu_1,\sigma_1,\alpha,\lambda,K$ fixes the value of $b_1$,
and that the two cases (iii)(b,c) of Theorem \ref{main}
can be reformulated as
\begin{itemize}
\item[($b'$)]
$b_0\leq s_0$ and $b_1<s_0$,
\item[($c'$)]
$b_0\leq s_0<K$ and $b_1\ge s_0$.
\end{itemize}
However, the formulation of the two cases (iii)(a) and (iv)
requires the value of $V(s_0,1,1)$, and that's why we decided
to formulate (iii)(b,c) using $V(s_0,1,1)$, too.

In what follows, when using the noun `put' without further specification,  
we mean a {\it perpetual} American put as considered in Theorem \ref{main}.
However, as motivated in Remark \ref{world}(ii,iii,v),
the average length of the put's optimal exercise time is supposed to be rather short,
and hence we think that our analysis also produces
good benchmarks for traded American puts with times to maturity being long enough to allow
for medium term option trading, that is, three months and longer.

First, we have to choose the put's underlying asset.
By the macroeconomic explanation given by Black in \cite{black1976},
we think that a leverage effect is more likely to be observed
when the whole market falls, and hence we choose an index,
say, the Dow Jones index.

Second, to fully determine the put, a strike level has to be chosen.
At the end of this section we give a summary of how to choose the strike level
motivated by our discussion of Theorem \ref{main} below.
For now we choose $K=17000$ for demonstration.

Next we fix the following hypothetical values for
$\sigma_0=20\%,\,\mu_1=0,\,\sigma_1=35\%,\,\alpha=5\%,\,\lambda=100$,
and refer to Remark \ref{world} and Remark \ref{discount}(i) for their interpretation.

So, $\sigma_0=20\%$ is supposed to be the implied volatility at present time
of a traded American put with strike $K$ and time to maturity of at least three months,
and we assume that the expected market drop would cause an 
`excited' volatility of $\sigma_1=35\%$.

Setting $\lambda=100$ means to assume that the `excited' state
would only last for half a week on average.

After the market has dropped, it is not clear what the new trend $\mu_1$
of the index would be. Furthermore, if the `excited' state only lasts
for a short period of time, one can assume that the `excited' fluctuations
according to the bigger $\sigma_1$ dominate the trend.
Thus a reasonable choice for the new trend would be $\mu_1=0$.

The two remaining non-fixed parameters are $s_0$ and $\mu_0$.
Since $\sigma_0,\alpha,K$ have been fixed, there is a one-to-one
correspondence between $\mu_0$ and $b_0$, and hence,
each pair $(\mu_0,s_0)$ determines one of the four cases
(iii)(a-c) and (iv) of Theorem \ref{main}.
The optimal stopping rules given in each of these cases
are called {\em strategies} of the trader, in what follows.

In practice, depending on the present value $s$ of the Dow Jones, 
the trader would choose $s_0$ according to their preferences
of the future---they expect a market drop of a certain size.
In our analysis we take the reverse point of view:
we first classify the values of $s_0$ and then discuss the impact
of present values $s$ \underline{above} $s_0$
on the strategy to be chosen by the trader.

While $s-s_0$ determines the size of the expected market drop,
the choice of $\mu_0$ determines how soon this is supposed to happen
in terms of the model (recall that $\sigma_0$ has been fixed).
A rather small value of $\mu_0$ should be used if one wants that
many price-trajectories predicted by the model reach the level $s_0$
in a rather short time. For example, a value of $\mu_0=-100\%$ would imply
that, roughly, the value of the index expected under the model
drops from 15600 to 15000 within two weeks.
% m0 x (time to s_0) x (s_0 + s)/2 = s_0 - s by `trapez'

In contrast, in the case of bigger values of $\mu_0$, the model
more often predicts rising values of the index in the future,
and this is of course not in accordance with an expected market drop.
We will nevertheless analyse bigger values of $\mu_0$ 
because the corresponding strategies might be of use for the trader 
in case they learn during the trade 
that their preferences of the future were wrong.
 
Figure 1 below shows the blue graph of $b_0=b_0(\mu_0)$ embedded into 
the $(\mu_0,s_0)$\,-\,plane.
The red horizontal line marks the level $b_1=14658$ 
which crosses $b_0(\mu_0)$ at $\mu_0=13.7\%$.
\begin{center}
\includegraphics[width=7.5cm]{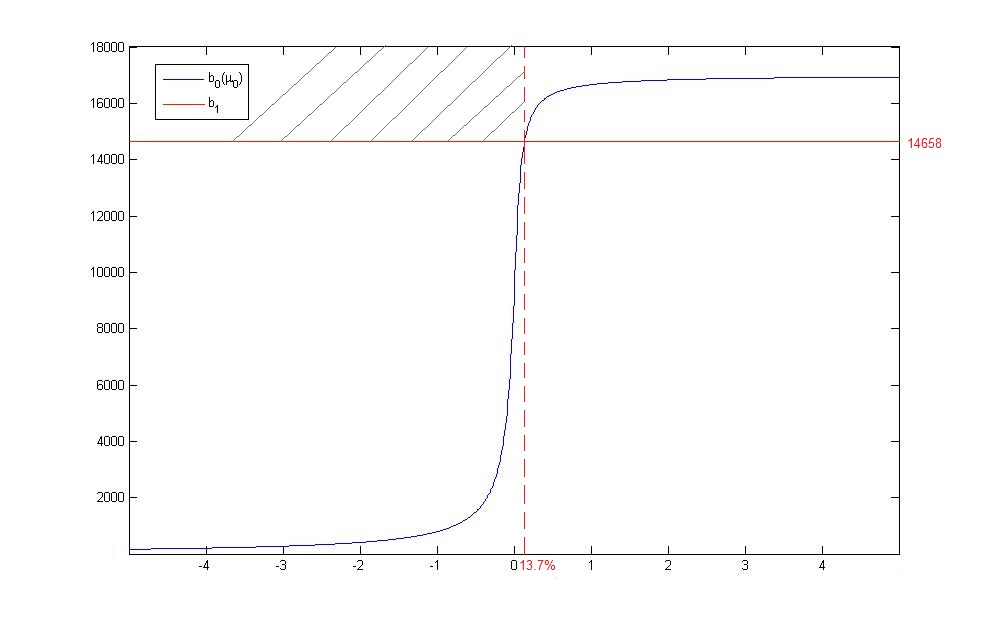}
\hspace{-.5cm}
\includegraphics[width=7.5cm]{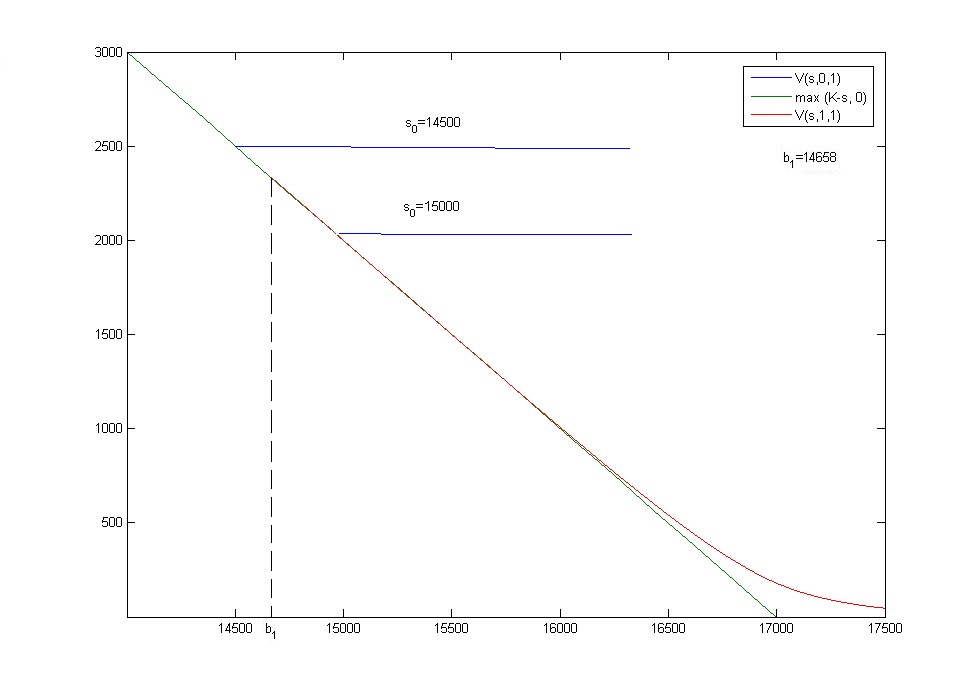}
\end{center}
\vspace{-.7cm}
\centerline{\tiny\hspace*{.5cm} Figure 1 \hspace{7cm} Figure 2}

\medskip
Any point $(\mu_0,s_0)$ left (or above) of the curve $b_0(\mu_0)$
is associated with one of the cases (iii)(b,c), 
and any point right (or below) the curve
is associated with one of the cases (iii)(a),(iv) of Theorem \ref{main}. 

Figure 2 shows the value functions 
corresponding to the two points $(-1,14500)$ and $(-1,15000)$ 
which are both left of the curve $b_0(\mu_0)$ in Figure 1.
The green anti-diagonal line is part of the gain function $(K-\cdot)^+$,
and the red convex curve is the graph of $V(\cdot,1,1)$
which merges onto the gain function at $b_1$.
The blue branches hitting $V(\cdot,1,1)$
at $s_0=14500$ and $s_0=15000$, respectively, 
are the graphs of the corresponding version of $V(\cdot,0,1)$
before the regime change at $s_0$.

Recall that $V(\cdot,0,1)$ and $V(\cdot,1,1)$ are two different components
of the value function, and they only meet continuously in the above picture
because of the boundary condition 
explained in Remark \ref{explain generator}(i) on page \pageref{explain generator}.

Figure 2 can be used to illustrate the qualitative difference 
between the strategy assuming $s_0=14500\le b_1$ (case (iii)(c))
and the one assuming $s_0=15000>b_1$ (case (iii)(b)).
When assuming $s_0=14500$, the trader waits for the index to reach $s_0$
and would then sell/exercise the put immediately. When assuming $s_0=15000$,
they would also wait for the index to reach $s_0$ but would then exploit
the regime change from $\sigma_0$ to $\sigma_1$ implemented into their model
due to an implied leverage effect during a market fall: they would
either sell/exercise the put after a further waiting time of the order of $1/\lambda$,
or sell/exercise the put when the index reaches $b_1$.
Note that, in our example, $1/\lambda$ equals 1/2 week which is very short. 
As a consequence, $V(\cdot,1,1)$ looks very similar to how 
the value function of a traded American put shortly before maturity would look like, 
and this explains why $V(\cdot,1,1)$ is so close to the gain function.
\begin{remark}\rm\label{robust region}
\begin{itemize}\item[(i)]
According to our definition of the value function,
all strategies refer to exercising the option. However,
since the price of an option which has not matured yet always tops its exercise value,
selling the option would not cause any disadvantage.
\item[(ii)]
As argued above, choosing a put with strike $K$ such that
the level $s_0$ defining the trade is below $b_1$ 
and then applying Theorem \ref{main} using a small value of $\mu_0$, that is,
using a value of $\mu_0$ in accordance with an expected market drop
results in an optimal strategy 
where the trader would NOT benefit from the implied leverage effect.
So, the trader would want to choose $K$ such that the level $s_0$
they have in mind is above $b_1$.
\item[(iii)]
Following (ii),
the strategy to be used would be the one described above
with respect to $s_0=15000$ (case (iii)(b)) 
for at least all $(\mu_0,s_0)$ in the marked area shown in Figure 1.
Notice that
this area covers values of $\mu_0$ as large as $13.7\%$ 
which is, for example, 
well-above the 10 years (2004-2013) average return rate of $6.05\%$  
of the Dow Jones.
Thus, the strategy given in case (iii)(b) of Theorem \ref{main}
is \underline{robust} in the sense that it applies to small values of $\mu_0$,
when the model would predict a market drop
in accordance with the preferences of the trader,
but also to `neutral' values of $\mu_0$,
when the model would predict standard returns rather than a market drop.
\end{itemize}
\end{remark}

Because of Remark \ref{robust region}(ii), 
we restrict the remaining part of our discussion to cases where $s_0>b_1$.
By Remark \ref{robust region}(iii), we know that the strategy given
in case (iii)(b) is robust for small and `neutral' values of $\mu_0$.
Next, we discuss the type of strategy offered by Theorem \ref{main}  
when the trader's preferences for the future are `entirely' wrong,
that is, when $\mu_0$ is significantly bigger than $13.7\%$
and $(\mu_0,s_0)$ belongs to the quadrant on the right-hand side
of the marked area in Figure 1.
For demonstration, we choose $\mu_0=30\%$.

Figure 3 shows the part of the quadrant on the right-hand side
of the marked area in Figure 1 which refers to $10\%\le\mu_0\le 100\%$.
The blue upper concave curve is the graph of $b_0(\mu_0)$,
and the green concave curve beneath,
which meets the upper curve at $\mu_0=13.7\%$, 
is the graph of a function we call $s_0^{max}=s_0^{max}(\mu_0)$. 
This function gives the root of the equation
\[
V(s_0,1,1)\,=\,(K-b_0)\bigg(\dfrac{s_0}{b_0}\bigg)^{\gamma^-},
\quad
\mbox{$s_0$ unknown,}
\]
which is the value of $s_0$ at which the switch between 
case (iii)(a) and case (iv) of Theorem \ref{main} occurs.
The red horizontal line again marks the level of $b_1=14658$,
and the black vertical fat bar marks 
the values of $s_0$ between $b_1$ and $s_0^{max}=15742$ at $\mu_0=30\%$.
\begin{center}
\includegraphics[width=7.5cm]{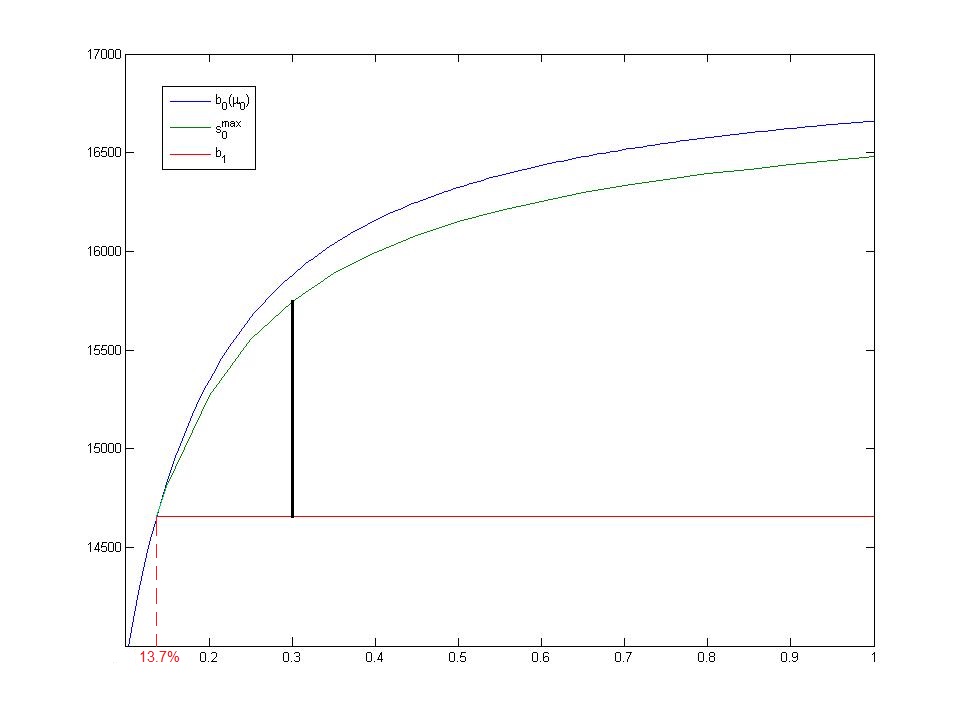}
\hspace{-.5cm}
\includegraphics[width=7.5cm]{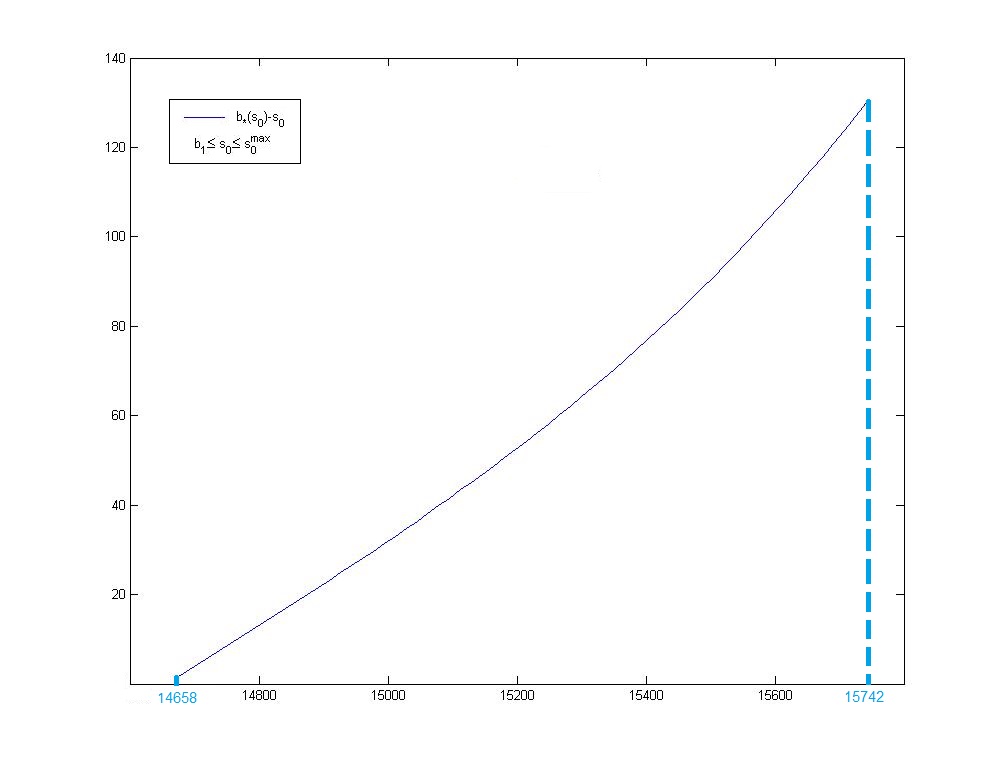}
\end{center}
\vspace{-.7cm}
\centerline{\tiny\hspace*{.5cm} Figure 3 \hspace{7cm} Figure 4}

\medskip
Any point $(\mu_0,s_0)$ between the horizontal line and the curve $s_0^{max}(\mu_0)$
is associated with case (iv) of Theorem \ref{main},
while any point between the two curves $s_0^{max}(\mu_0)$ and $b_0(\mu_0)$
is associated with case (iii)(a).

In case (iv), there exists a corresponding $b_*=b_*(\mu_0,s_0)\in(s_0,b_0)$.
For fixed $\mu_0=30\%$, we write $b_*(s_0)$ for $b_*(0.3\,,s_0)$,
and Figure 4 shows the graph of $b_*(s_0)-s_0$ for those values of $s_0$
marked by the vertical fat bar in Figure 3.
Note that a further look at the proof of Lemma \ref{choice b*} reveals
$\lim_{s_0\uparrow s_0^{max}}b_*(s_0)=b_0$.

Figure 5 below shows the value function corresponding to the point $(0.3\,,15000)$
which is a point on the vertical fat bar in Figure 3. To better illustrate the typical shape
of the components of this value function, we scaled the axes in a non-linear way
which is why, in contrast to the other figures,
there are no numerical values assigned to the axes.
\begin{center}
\includegraphics[width=7.5cm]{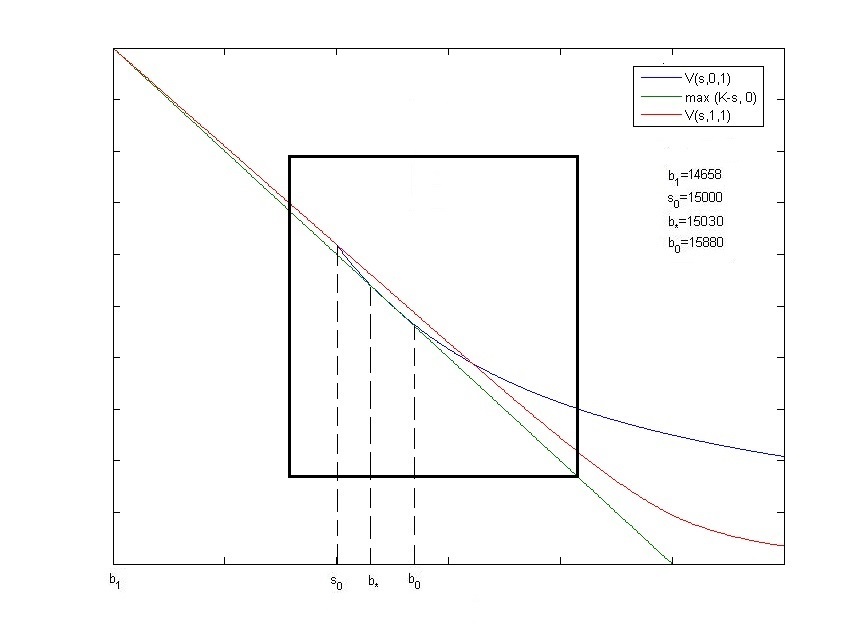}
\hspace{-.5cm}
\includegraphics[width=7.5cm]{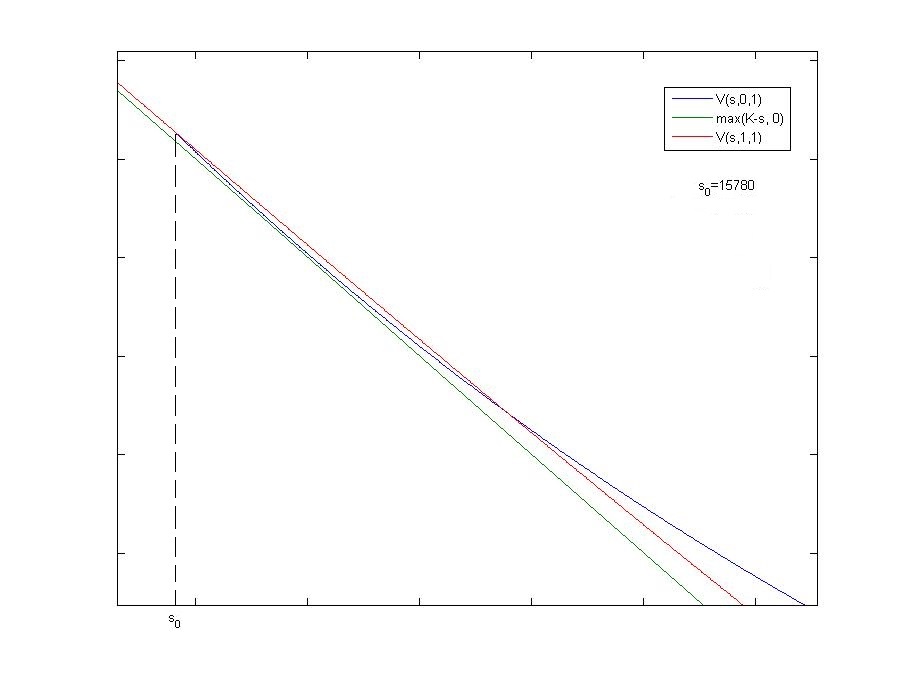}
\end{center}
\vspace{-.7cm}
\centerline{\tiny\hspace*{.5cm} Figure 5 \hspace{7cm} Figure 6}

\medskip
The green anti-diagonal line is part of the gain function $(K-\cdot)^+$,
and the red convex curve is the graph of $V(\cdot,1,1)$
which merges onto the gain function at $b_1$ in the upper left corner.
The component $V(\cdot,0,1)$, plotted in blue, is identical to the gain function 
between $b_*$ and $b_0$ but also has two branches: 
the left branch below $V(\cdot,1,1)$ connects $V(\cdot,1,1)$ at $s_0$ 
with the gain function at $b_*$, 
and the right branch crosses $V(\cdot,1,1)$ 
before merging onto the gain function at $b_0$.

Figure 6 zooms into the window marked in Figure 5
showing the components of the value function corresponding to the point $(0.3\,,15780)$
which is a case-(iii)(a)-point 
above the vertical fat bar in Figure 3 but still below the curve $b_0(\mu_0)$.
Only the component $V(\cdot,0,1)$ changes. While, in Figure 5, 
$V(\cdot,0,1)$ is identical to the gain function on a whole interval $(b_*,b_0)$,
in Figure 6, its graph stays above the gain function everywhere crossing $V(\cdot,1,1)$
from the right and meeting it again further left at $s_0$.

Figure 6 graphically confirms Theorem \ref{main} in asserting that
the case-(iii)(a)-strategy is identical to the case-(iii)(b)-strategy
discussed in the context of Figure 2. All in all, the case-(iii)(b)-strategy
would be applicable for all $(\mu_0,s_0)$ in both
the marked area shown in Figure 1
and in that part of the quadrant on the right-hand side
of this marked area which is above the curve $s_0^{max}(\mu_0)$ in Figure 3.
\begin{remark}\rm\label{don't wait}
Recall that the case-(iii)(b)-strategy involves waiting for the index 
to fall $s-s_0$ points where, by Remark \ref{world}(ii), the size of $s-s_0$ is considerable.
Thus, for values of $\mu_0$ as large as $30\%$ in our example,
one would expect the index to take a rather long time for dropping as much as $s-s_0$.
To avoid this risk,
the trader would not want to choose a put with strike $K$
such that the level $s_0$ defining the trade
is above the curve $s_0^{max}(\mu_0)$ in Figure 3
for a range of `larger' values of $\mu_0$.
\end{remark}

The alternative to this unsuitable choice of $K$ 
would be to choose a put with strike $K$ such that
the level $s_0$ stays below the curve $s_0^{max}(\mu_0)$ in Figure 3
for all `larger' values of $\mu_0$.
This alternative refers to the remaining case-(iv)-strategy,
and we return to Figure 5 to discuss this strategy in more detail.

Recall that $\mu_0=30\%$ and $s_0=15000$ in our example.
According to the function $b_*(s_0)-s_0$ shown in Figure 4,
the gap between $s_0$ and $b_*$ in our example
is about 30 points of the Dow Jones index.
Clearly, if $s-s_0$ is of the order of 50 points
and the value $s$ of the Dow Jones is of the order of 15000 points,
then a drop from $s$ to $s_0$ would not have any effect on the volatility of the index.
So, at least in our example, to be in agreement with the model's assumptions,
the value $s$ should be well above $b_*$ (i.e.\ $b_*\ll s$).

Figure 5 can now be used to illustrate the two different strategies depending 
on how much the present value $s$ is above $b_*$.
When assuming $b_*<s<b_0$, the trader would sell/exercise immediately,
while, when assuming $b_0<s$, the trader waits for the index to reach $b_0$
and would then sell/exercise. By the same reason given in Remark \ref{don't wait},
the trader would not want to wait for the index to reach $b_0$ 
if $\mu_0$ is as large as $30\%$. Therefore, 
a further but final constraint on where the present value $s$
should be located is $b_*\ll s<b_0$.
Note that $b_0-b_*$ is of the order of 800 points in our example
which is on the right scale for taking into account a possible leverage effect
if the index drops from $s$ satisfying $b_*\ll s<b_0$ 
to a level $s_0$ below $b_*=15030$.
For given $s$ and $s_0$, the relation $b_*({\mu}_0,s_0)<s<b_0({\mu_0})$
wanted for all `larger' values of $\mu_0$ can be achieved 
by choosing an appropriate strike level $K$.
\subsection{Choosing the Strike Level}
The following steps present a summary of the previous discussion
on how to choose the strike level of the put
depending on both
the stopping levels $b_0,b_1,b_*$ given by Theorem \ref{main}
and the level $s_0^{max}$ introduced in the paragraph preceding Figure 3 above.
After this summary we briefly describe how the strategies given in
Theorem \ref{main}
could be used for trading.
\begin{itemize}
\item[{\bf Step 1:}]
Fix a discount rate $\alpha$ and choose an index with present value $s$.
Find $\sigma_0$ by comparing implied volatilities calculated from
a range of traded options on the index. Decide about the size of $s-s_0$ 
the index is expected to drop in the near future.
Based on analysing historical data or otherwise, decide about the size 
of the `excited' volatility $\sigma_1$. Analysing historical data or otherwise,
find the average time span of an `excited' volatility regime
after a drop of size $s-s_0$ of the index of your choice, that is,
find $1/\lambda$. Set $\mu_1=0$.
\item[{\bf Step 2:}]
For different values of $K$ calculate:
$b_1$; $\tilde{\mu}_0$ such that $b_1=b_0(\tilde{\mu}_0)$;
$s_0^{max}(\tilde{\mu}_0+\rho_0)$ for sufficiently large $\rho_0$;
$b_0(\tilde{\mu}_0+\rho_0)$; $b_*(\tilde{\mu}_0+\rho_0,s_0)$.
For the right tuning of $\rho_0$ compare with both Figure 3 where 
$\tilde{\mu}_0$ and $\tilde{\mu}_0+\rho_0$ were $13.7\%$ and $30\%$, respectively,
and the comments in Remark \ref{robust region}(iii) about the magnitude of $13.7\%$.
\item[{\bf Step 3:}]
Finally choose a put with strike level $K$ such that 
$b_1<s_0<s_0^{max}(\tilde{\mu}_0+\rho_0)$
and
$b_*(\tilde{\mu}_0+\rho_0,s_0)<s<b_0(\tilde{\mu}_0+\rho_0)$.
We think that, for trading, the present value $s$ of the index and the drop-to-level $s_0$
would be placed best if the size of $b_*(\tilde{\mu}_0+\rho_0,s_0)-s_0$
is on a smaller scale than $s-s_0$ as in our example above.
Note that this would also entail $b_*(\tilde{\mu}_0+\rho_0,s_0)\ll s$.
\end{itemize}

When trading a put of the above choice using the strategies given by
Theorem \ref{main},
assuming that the value of $\mu_0$ is sufficiently negative
to be conform with a drop of size $s-s_0$ in the near future,
the trader would initially follow the case (iii)(b) strategy.

First, if the level $s_0$ is reached within the expected time frame,
the trader would continue following the case (iii)(b) strategy to the
end.
In practical terms, the exponential waiting time should be realised
by waiting a multiple of the average waiting time $1/\lambda$
where the choice of the multiple is up to the trader.

Second, if the level $s_0$ is \underline{not} reached within the
expected time frame,
the trader would have gained enough new market data to update
the value of $\mu_0$. Based on statistical testing or otherwise,
they should decide whether the updated value of $\mu_0$ is below
$\tilde{\mu}_0$
or above $\tilde{\mu}_0+\rho_0$.

If the decision is for the updated $\mu_0$ to be below
$\tilde{\mu}_0$,
the trader could continue following the case (iii)(b) strategy
(updating $\mu_0$ again if necessary),
but they should also consider to finish the trade as soon as
selling/exercising
would not result in any losses.

If the decision is for the updated $\mu_0$ to be above
$\tilde{\mu}_0+\rho_0$,
the trader should change to the case (iv) strategy but with respect to
the most recent value of the underlying asset. If this \underline{new}
present value $s$
is above $b_*(\tilde{\mu}_0+\rho_0,s_0)$, they should sell/exercise
immediately.
However, if it is in the range of $s_0$ to
$b_*(\tilde{\mu}_0+\rho_0,s_0)$,
the trader could continue following the primary case (iii)(b)
strategy,
unless the level $b_*(\tilde{\mu}_0+\rho_0,s_0)$ is reached before the
drop-to-level $s_0$
when they should sell/exercise immediately.
\section*{Appendix}
We verify that the explicit expression given for $V(\cdot,1,1)$ 
in Theorem \ref{main}(ii)
is indeed the value function. Our method of verification is going to be different to the
standard method mentioned in Remark \ref{existingLit}(iii).

First, we introduce the formal differential operator
\[
L_1 f(s)\,=\,\mu_1 s f'(s)+\frac{1}{2}\sigma_1^2 s^2 f''(s)
+\lambda\left((K-s)^+ - f(s)\right)
\]
and remark that, by standard arguments
(see Section 5.2.1 in \cite{pham} for example),
%the continuity is done in Lemma 5.2.1 in the Pham book 
%and they would need that the discount is big enough
%which we would NOT need in our simple linear situation
the function $V(\cdot,1,1)$ is continuous and satisfies the variational inequality
\[
\min\left((\alpha I-L_1)V(\cdot,1,1)\,,\,V(\cdot,1,1)-(K-\cdot)^+\right)\,=\,0
\quad\mbox{on}\quad
(0,\infty)
\]
in viscosity sense.

Second, 
given on a family of probability spaces
$(\tilde{\Omega},\tilde{\mathcal{F}},\tilde{\pp}_{\!s},s>0)$,
we consider the Feller process $(\tilde{S}_t,\,t\ge 0)$
whose generator is the closure of
\[
\tilde{L}_1f=\mu_1s f'+\dfrac{1}{2}\sigma_1^2s^2f'',
\quad
f\in C_0^2((0,\infty)),
\]
and define the value function
\[
\tilde{V}_1(s)\,=\,\sup_{\tilde{\tau}\ge 0}
\tilde{\ee}_s[e^{-\tilde{\beta}\tilde{\tau}}(K-\tilde{S}_{\tilde\tau})^+
\,+\,\lambda\!\int_0^{\tilde{\tau}}e^{-\tilde{\beta}u}(K-\tilde{S}_u)^+\,\dd u],
\]
where $\tilde{\beta}=\alpha+\lambda$ and the supremum is taken over all
stopping times with respect to $(\tilde{S}_t,\,t\ge 0)$.

Third, by \cite[Thm.5.2.1]{pham}, the value function $\tilde{V}_1$
is the unique viscosity solution to the variational inequality 
\[
\min\left((\tilde{\beta}I-\tilde{L}_1)f-\lambda(K-\cdot)^+\,,\,f-(K-\cdot)^+\right)\,=\,0
\quad\mbox{on}\quad
(0,\infty)
\]
satisfying a linear growth condition.
Therefore, since 
\[
\alpha I-L_1\,=\,\tilde{\beta}I-\tilde{L}_1 -\lambda(K-\cdot)^+,
\]
and since $V(\cdot,1,1)$ is bounded, it follows from the above first step that $V(\cdot,1,1)=\tilde{V}_1$.

Fourth, by \cite[Lemma 5.2.2]{pham}
(uniform ellipticity on compact subsets of $(0,\infty)$ is sufficient in our case),
the value function $\tilde{V}_1$ is $C^2$  inside the continuation region,
and, by \cite[Prop.5.2.1]{pham}, it is $C^1$ on the boundary.
As a consequence, the value function $V(\cdot,1,1)=\tilde{V}_1$ is $C^1$ 
on the whole domain $(0,\infty)$ and $C^2$ inside of
$\{s>0:V(\cdot,1,1)>(K-s)^+\}$.

If one can now prove that the set $\{s>0:V(\cdot,1,1)>(K-s)^+\}$ must have the form $(b_1,\infty)$
for some optimal stopping level $b_1\in(0,K)$,
then $V(\cdot,1,1)$ would satisfy both the boundary and pasting conditions (\ref{5})
and, in classical sense, the equations above (\ref{5}) on page \pageref{5}.
But the latter must have the solution given by (\ref{classical solution})
leading to the explicit expression given for $V(\cdot,1,1)$ in Theorem \ref{main}(ii).

\medskip
\noindent
{\bf Remark A.1.}\label{not unique}
The optimal stopping level $b_1$ must satisfy (\ref{99})
because $V(\cdot,1,1)$ satisfies (\ref{5}).
However, finding $b_1$ by solving (\ref{99}) requires showing uniqueness of solutions
to a non-linear equation.
This uniqueness problem was neither addressed in \cite{Guo} nor in \cite{Rogers}.
For completeness, we are going to show uniqueness of solutions to (\ref{99})
in Lemma {\bf A.3} after the next lemma.
However, Example {\bf A.4} at the end of the Appendix shows that this
uniqueness is \underline{not} an intrinsic property of equations like (\ref{99})
even if the equations were derived from an optimal stopping problem
with convex gain function.

\medskip
On the whole, finishing our method of verification, we only need to prove
the following lemma.

\medskip
\noindent
{\bf Lemma A.2.}
{\it There exists $b_1\in(0,K)$ such that}
\[
\{s>0:V(\cdot,1,1)>(K-s)^+\}\,=\,(b_1,\infty).
\]
\begin{proof}
The proof can be divided into the two cases $\mu_1\le\alpha$ and $\mu_1>\alpha$.

For $\mu_1\le\alpha$, the process $(e^{-\alpha t}\,S_t,\,t\ge 0)$
is a $\pp_{\!\!s,1,1}$\,-\,supermartingale, and the conclusion of the lemma
follows by copying the proof of Proposition 1 in \cite{Rogers}.

For $\mu_1>\alpha$, we present a proof which works for $\mu_1\ge 0$.

First, since our gain function is $(K-\cdot)^+$, one has, for all $\mu_1$, that
$[K,\infty)\times\{1\}\times\{1\}$ is in the continuation region.
As a consequence,
\[
b_1\,\stackrel{\mbox{\tiny def}}{=}\,
\sup\{s>0:V(s,1,1)=K-s\}\,<\,K
\]
because the stopping region is closed when both value function and gain function are continuous.

We now want to prove by contradiction that $b_1>0$ and that $V(s,1,1)=K-s$,
for all $s\in(0,b_1]$, finishing the proof of the lemma.

Recall the regularity properties of $V(\cdot,1,1)$ discussed in the first part
of the Appendix. So, if $b_1=0$, that is, if the supremum is taken over the empty set, then
\[
V(s,1,1)\,=\,
c_1 s^{\beta^+}+\;c_2 s^{\beta^-}
+\;h(s),
\quad\mbox{for}\;s\in(0,K),
\]
by the same arguments which led to (\ref{classical solution}) in Section \ref{guo case}.
Since $V(\cdot,1,1)$ is bounded, the coefficient $c_2$ would have to be zero, 
and hence, using the explicit choices for $h$ given in Remark \ref{existingLit}(ii),
\[
\lim_{s\downarrow 0}V(s,1,1)\,=\,\frac{\lambda K}{\alpha+\lambda}\,<\,K
\]
which would contradict $V(\cdot\,,1,1)\ge(K-\cdot)^+$ on $(0,\infty)$.

Next, assume $b_1>0$ and $V(s,1,1)>K-s$ for some $s\in(0,b_1]$.

Then there would exist a component $(u_1,u_2)$ of the continuation region,
where\footnote{One can rule out $u_1=0$ the same way $b_1=0$ was ruled out.}
$0<u_1<u_2<b_1$, such that
\[
\mu_1 s\partial_1V(s,1,1)
+\frac{1}{2}\sigma_1^2 s^2\partial_{11}V(s,1,1)
+\lambda(K-s)
-(\alpha+ \lambda )V(s,1,1)
\,=\,0,\eqno({\rm A.1})
\]
for all $s\in(u_1,u_2)$, in classical sense, and
\[
V(u_1,1,1)=K-u_1,\;
\partial_1 V(u_1,1,1)=-1,\;
V(u_2,1,1)=K-u_2,\;
\partial_1 V(u_2,1,1)=-1.\eqno({\rm A.2})
\]

Note that, since $V(\cdot,1,1)$ dominates $(K-\cdot)^+$, $({\rm A.1})$ implies
\[
\partial_{11}V(s,1,1)\,\ge\,
-\frac{2\mu_1}{s\sigma_1^2}\,\partial_1V(s,1,1)
+\frac{2\alpha}{s^2\sigma_1^2}\,V(s,1,1),
\quad\mbox{for $s\in(u_1,u_2)$,}
\]
leading to
\[
\partial_{11}V(s,1,1)\,\ge\,
\frac{2\alpha}{u_2^2\,\sigma_1^2}\,V(s,1,1)
\,\ge\,
\frac{2\alpha}{u_2^2\,\sigma_1^2}\,(K-u_2)\,>\,0,
\quad\mbox{for $s\in U_2\cap(u_1,u_2)$},
\eqno({\rm A.3})
\]
because $\mu_1\ge 0$ and, by continuity, 
$\partial_1 V(\cdot,1,1)$ is negative in a neighbourhood $U_2$ of $u_2$.

Thus, $\partial_1 V(\cdot,1,1)$ is strictly increasing in a left neighbourhood of $u_2$
so that, for $({\rm A.2})$ to be true,
there must exist a largest inflection point $u_0\in(u_1,u_2)$ given by
\[
u_0\,=\,\sup\{s\in(u_1,u_2):\partial_{11}V(s,1,1)=0\}.
\]
Of course, by continuity of $\partial_{11}V(\cdot,1,1)$ in the continuity region,
$\partial_{11}V(u_0,1,1)$ equals zero.
But, $\partial_{1}V(u_0,1,1)<-1$
since $\partial_1 V(\cdot,1,1)$ is strictly increasing on $(u_0,u_2)$,
and hence, by the same arguments leading to $({\rm A.3})$,
\[
\partial_{11}V(u_0,1,1)
\,\ge\,
\frac{2\alpha}{u_2^2\,\sigma_1^2}\,(K-u_2)\,>\,0
\]
which contradicts $\partial_{11}V(u_0,1,1)=0$.
\end{proof}

\noindent
{\bf Lemma A.3.}
{\it There is exactly one solution
$(c_1,c_2,d_2,b_1)\in\bfR^3\times(0,K)$
to the system (\ref{99}).}
\begin{proof}
We only show the lemma in the case of $\alpha+\lambda\not=\mu_1$
using the corresponding function $h$ given in Remark \ref{existingLit}(ii)
because nothing fundamental changes when doing the calculation 
in the remaining \underline{single} case of $\alpha+\lambda=\mu_1$
with another function $h$.

First, we ignore the last equation of the system (\ref{99}) 
and replace $b_1$ by an arbitrary $b>0$. 
The resulting system of equations reads 
\begin{align*}
c_1K^{\beta^+}+\;c_2K^{\beta^-}
-\;\dfrac{\lambda K}{\alpha+\lambda-\mu_1}
+\dfrac{\lambda K}{\alpha+\lambda}
&\,=\,d_2K^{\beta^-},\\
c_1\beta^+K^{\beta^+}+\;c_2\beta^-K^{\beta^-}-\;\dfrac{\lambda K }{\alpha+\lambda-\mu_1}
&\,=\,d_2\beta^-K^{\beta^{-}},\\
K-b&\,=\,c_1b^{\beta^+}+\;c_2b^{\beta^-}
-\;\dfrac{\lambda b}{\alpha+\lambda-\mu_1}
+\frac{\lambda K}{\alpha+\lambda}\,,
\end{align*}
and, for each $b>0$, 
this system admits a unique solution $c_1,c_2,d_2$.

To analyse the non-linear last equation of the system (\ref{99}),
we only need to know $c_1$ and $c_2$ which are explicitly given by
\begin{align*}
c_1&\,=\,\frac{\lambda K^{1-\beta^+}}{(\beta^--\beta^+)}
\bigg[\frac{\beta^-}{\alpha+\lambda-\mu_1}-\frac{\beta^-}{\alpha+\lambda}
-\frac{1}{\alpha+\lambda-\mu_1}\bigg],\\
\rule{0pt}{20pt}
c_2(b)&\,=\,\bigg[K-b+\frac{\lambda b}{\alpha+\lambda-\mu_1}
-\frac{\lambda K}{\alpha+\lambda}-c_1b^{\beta^+}\bigg]\frac{1}{b^{\beta^-}}\,.
\end{align*}

Second, for each $b>0$, we introduce the function
\begin{equation*}
V_b(s)\,=\,
c_1s^{\beta^+}+\;c_2(b)s^{\beta^-}-\;
\dfrac{\lambda s}{\alpha+\lambda-\mu_1}+
\dfrac{\lambda K}{\alpha+\lambda}\,,
\quad s>0,
\end{equation*}
and remark that $b>0$ satisfies
\begin{equation*}
-b=c_1\beta^{+}b^{\beta^+}+\;c_2(b)\beta^{-}b^{\beta^-}
-\;\dfrac{\lambda b}{\alpha+\lambda-\mu_1}
\end{equation*}
if and only if 
\[
\frac{\dd}{\dd s}\,V_b(s)|_{s=b}
\,=\,V'_b(b)\,=\,-1.
\]
Therefore, when setting $\Gamma(b)=V'_b(b)$ for $b>0$,
the proof of the lemma reduces to showing that
the equation $\Gamma(b)=-1$ has exactly one root between zero and $K$,
and this will be shown next.

By straight forward calculation, we have that
\begin{equation*}
\Gamma(b)\,=\,c_1b^{\beta^+-1}(\beta^+-\beta^-)
+\frac{\alpha K\beta^-}{(\alpha+\lambda)b}
-\frac{(\alpha-\mu_1)\beta^- +\lambda}{\alpha+\lambda-\mu_1},
\quad\mbox{for}\;b>0,
\end{equation*}
so that
\begin{equation*}
\lim\limits_{b\rightarrow 0}\Gamma(b)=-\infty\;\mbox{(since $\beta^-<0$)}
\quad\mbox{and}\quad
\Gamma(K)=0.
\end{equation*}
Thus, by Intermediate Value Theorem, 
there exists $b_1\in(0,K)$ such that $\Gamma(b_1)=-1$. 

For uniqueness, one only has to show that $\Gamma(\cdot)$ is increasing
on $(0,K)$, that is, $\Gamma'(b)\not=0$ for all $b\in(0,K)$.

But,
\begin{equation*}
\Gamma'(b)\,=\,
c_1(\beta^+-1)(\beta^+-\beta^-)b^{\beta^+-2}
-\frac{\alpha K\beta^-}{\alpha+\lambda}\,b^{-2},
\end{equation*}
and hence, for $b\in(0,K)$,
the equality $\Gamma'(b)=0$  is equivalent to
\[
\left(\frac{b}{K}\right)^{\beta^+}
=\;
\frac{\alpha(\alpha+\lambda-\mu_1)}
{\frac{1}{2}\sigma_1^2\lambda(\beta^+-1)(\beta^- -1)}\,,
\]
where we have used that $\beta^-$ is a root of the equation (\ref{100}).
As the above right-hand side is always negative,
because $\beta^+>1$ if $\alpha+\lambda>\mu_1$ and
$\beta^+<1$ if $\alpha+\lambda<\mu_1$,
there is no $b>0$ such that $\Gamma'(b)=0$.
\end{proof}

\medskip
\noindent
{\bf Example A.4.}
Let $g:(0,\infty)\to\bfR$ be a bounded smooth convex function
satisfying 
\[
g(x)\,=\left\{\begin{array}{rcl}
4&:&x=1/2\\1&:&x=1\end{array}\right.,
\quad
g'(x)\,=\left\{\begin{array}{rcl}
-8&:&x=1/2\\-1&:&x=1\end{array}\right.,
\quad
g(x)=0,\,x\ge 3,
\]
and consider the value function
$
V(x)=\sup_{\tau\ge 0}
\ee[e^{-\tau}\,g(X_\tau^x)]
$,
where $X^x_t=x e^{\sqrt{2}B_t},\,t\ge 0$, is a geometric Brownian motion
on a probability space $(\Omega,{\cal F},\pp)$.

Following the method used in \cite{Guo,Rogers},
the free-boundary value problem associated with this optimal stopping problem is
\[
0\,=\,xV'(x)+x^2V''(x)-V(x),
\quad\mbox{for}\;x>x_0,
\]
subject to
\[
V(x_0)=g(x_0),\quad V'(x_0)=g'(x_0),\quad \lim_{x\to\infty}V(x)=0.
\]

As any solution to this problem must have the form $c_1x+c_2x^{-1}$,
the above boundary and pasting conditions result in $c_1=0$ and two equations
\[
g(x_0)\,=\,c_2x_0^{-1},\quad g'(x_0)\,=\,-c_2x_0^{-2}
\]
for the pair of unknowns $(c_2,x_0)$.

There are at least two solutions to these equations,
$(c_2,x_0)=(1,1)$ and $(c_2,x_0)=(2,1/2)$,
but there might be even more.
Note that the value function is unique and can only be identical to one
of the candidate value functions build from these solutions $(c_2,x_0)$.
Hence this example indeed justifies Remark {\bf A.1} on page \pageref{not unique}.
\newpage

\end{document}